\providecommand{\sortnoop}[1]{}
\newcommand{\blankline}{\vspace{1.0\baselineskip}}
\newcommand{\halfblankline}{\vspace{0.5\baselineskip}}
\newsavebox{\@brx}
\newcommand{\llangle}[1][]{\savebox{\@brx}{\(\m@th{#1\langle}\)}%
  \mathopen{\copy\@brx\mkern2mu\kern-0.9\wd\@brx\usebox{\@brx}}}
\newcommand{\rrangle}[1][]{\savebox{\@brx}{\(\m@th{#1\rangle}\)}%
  \mathclose{\copy\@brx\mkern2mu\kern-0.9\wd\@brx\usebox{\@brx}}}
\providecommand{\sortnoop}[1]{}
\newcommand{\BExpF}{\mathbb{B}[ \mkern1mu \calF \mkern1mu ]}
\newcommand{\TRUE}{\textbf{true}}
\newcommand{\bbB}[1]{\lbb {#1} \rbb}
\newcommand{\bbF}[1]{\lbb {#1} \rbb_F}
\newcommand{\bbFp}[1]{\lbb {#1} \rbb_{F|p}}
\newcommand{\bbFF}[1]{\lbb {#1} \rbb_{F}^{\prime}}
\newcommand{\bbL}[1]{\lbb {#1} \rbb_L}
\newcommand{\bnfeq}{\mathrel{{:}{:}=}}
\newcommand{\bftwo}{\textbf{\textrm{2}}}
\newcommand{\bools}{\mathbb{B}}
\newcommand{\eps}{\varepsilon}
\newcommand{\falsum}{\mathord{\perp}}
\newcommand{\lbb}{\mathopen{\lbrack \! \lbrack \mkern2mu}}
\newcommand{\lc}{\mathopen{\lbrace \,}}
\newcommand{\modelsF}{\mathrel{\models_{\mkern-1mu F}}}
\newcommand{\modelsFprime}{\mathrel{\models'_{\mkern-1mu F}}}
\newcommand{\modelsFp}{\mathrel{\models_{\mkern-1mu F \mkern-1mu | \mkern0.5mu p}}}
\newcommand{\modelsL}{\mathrel{\models_{\mkern-1mu L}}}
\newcommand{\mCRL}{{\texttt{\MakeLowercase{m}CRL2}}}
\newcommand{\muL}{\mu \mkern-1mu L}
\newcommand{\muLFO}{\mu \mkern-1mu L \mkern-0.5mu {}_{\textrm{\tiny FO}}}
\newcommand{\muLf}{\mu \mkern-1mu L \mkern-1mu {}_f}
\newcommand{\muLfprime}{\mu \mkern-1mu L' \mkern-6mu {}_f}
\newcommand{\mybox}[1]{\lbrack {#1} \rbrack \mkern1mu}
\newcommand{\mydiamond}[1]{\langle {#1} \rangle}
\newcommand{\myifthenelse}[3]{\text{\textbf{if} ${#1}$ \textbf{then} ${#2}$ \textbf{else} ${#3}$ \textbf{end}}}
\newcommand{\myruby}[1]{\llangle {#1} \rrangle}
\newcommand{\ofsort}{\mkern-2mu \mathord{\colon} \mkern-6mu}
\newcommand{\powerset}[1]{\mathrm{\textbf{\rm 2}}^{#1}}
\newcommand{\powersetP}{\powerset{\mkern1mu \calP}}
\newcommand{\powersetS}{\powerset{\mkern1mu S}}
\newcommand{\powersetSP}{\powerset{\mkern1mu S \times \calP}}
\newcommand{\psif}{\psi_{\mkern-1mu f}}
\newcommand{\rbb}{\mathclose{\mkern2mu \rbrack \! \rbrack}}
\newcommand{\rc}{\mathclose{\, \rbrace}}
\newcommand{\singleton}[1]{\lbrace {#1} \rbrace}
\newcommand{\sinit}{s_{\mkern-1mu {\ast}}}
\newcommand{\sortD}{\ensuremath{\mathrm{D}}}
\newcommand{\trans}[2]{\xrightarrow{{#1}}_{#2}}
\newcommand{\transF}[2]{\xrightarrow{{#1}| \mkern0.5mu {#2}}_F}
\newcommand{\transFp}[1]{\xrightarrow{#1}_{F|p}}
\newcommand{\ttf}{\textsf{f}}
\newcommand{\ttg}{\textsf{g}}
\newcommand{\varphif}{\varphi_{\mkern-1mu \mathalpha{f}}}
\newcommand{\varv}{\mathrm{v}}
\newcommand{\verum}{\mathord{\top}}
\newcommand{\PSet}{\textsl{P$\mkern1mu$Set}}
\newcommand{\fmf}{\textsl{fm} \mkern1mu}
\newcommand{\fp}{\textsl{fp} \mkern1mu}
\newcommand{\gfp}{\textsl{gfp}}
\newcommand{\lfp}{\textsl{lfp}}
\newcommand{\sEnv}{\textsl{s$\mkern1mu$Env}}
\newcommand{\sPEnv}{\textsl{sPEnv}}
\newcommand{\spEnv}{\textsl{spEnv}}
\newcommand{\sSet}{\textsl{s$\mkern1mu$Set}}
\newcommand{\sPSet}{\textsl{sPSet}}
\newcommand{\spSet}{\textsl{spSet}}
\newcommand{\smf}{\textsl{sm} \mkern1mu}
\newcommand{\calA}{\mathcal{A}}
\newcommand{\calF}{\mathcal{F}}
\newcommand{\calP}{\mathcal{P}}
\newcommand{\calX}{\mathcal{X}}
\newtheorem{theorem}{Theorem}
\newtheorem{lemma}[theorem]{Lemma}
\newtheorem{definition}[theorem]{Definition}
\newtheorem{example}[theorem]{Example}
\title{Towards a Feature mu-Calculus Targeting SPL Verification}
\author{%
  Maurice H. ter Beek   
  \institute{ISTI--CNR, Pisa, Italy} 
  \email{m.terbeek@isti.cnr.it}
\and
  Erik P. de Vink
  \institute{TU/e, Eindhoven, The Netherlands}
  \institute{CWI, Amsterdam, The Netherlands}
  \email{e.p.d.vink@tue.nl}
\and
  Tim A. C. Willemse
  \institute{TU/e, Eindhoven, The Netherlands}
  \email{t.a.c.willemse@tue.nl}
}
\begin{document}

\maketitle

\begin{abstract}
  \textbf{Abstract} The modal $\mu$-calculus $\muL$ is a well-known
  fixpoint logic to express and model check properties interpreted
  over labeled transition systems.  In this paper, we propose two
  variants of the $\mu$-calculus, $\muLf$ and $\muLfprime$, for
  feature transition systems.  For this, we explicitly incorporate
  feature expressions into the logics, allowing operators to select
  transitions and behavior restricted to specific products and
  subfamilies. We provide semantics for $\muLf$ and $\muLfprime$ and
  relate the two new $\mu$-calculi and~$\muL$ to each other.  Next, we
  focus on the analysis of SPL behavior and show how our formalism can
  be applied for product-based verification with $\muLf$ as well as
  family-based verification with $\muLfprime$.  We illustrate by means
  of a toy example how properties can be model checked, exploiting an
  embedding of $\muLfprime$ into the $\mu$-calculus with data.
\end{abstract}


\hyphenation{sub-formula}


\section{Introduction}
\label{intro}

Formal methods and analysis tools for the specification and
verification of SPL models are widely
studied~\cite{SH11,BCLW13,TAKSS14}. Since many SPL
applications concern embedded, safety-critical systems, guaranteeing
their correct behavior by means of formal verification is an important
subject of study.  
However, when the
system to be analyzed is a product line, i.e.\ a family of systems,
the number of possible products leads to an exponential blow-up:
growing numbers of products on top of increasing numbers of
states. Hence, enumerative product-by-product analysis methods will
quickly prove infeasible for larger SPL models. Family-based
verification, as opposed to product-based verification, seeks to
exploit the commonalities of products that underlies a product line to
tackle large SPL models~\cite{TAKSS14}.

In recent years, numerous variants of known behavioral models have
been tailored to deal with the variability of SPL with the aim of
verifying temporal properties of SPL models. These include modal
transition systems (MTS)~\cite{FUB06,BFGM15b}, I/O
automata~\cite{LNW07,LPT09}, process
calculi~\cite{EW11,LT12,BLP13,LMBR14,Tri14} and feature
transition systems (FTS)~\cite{CCSHLR13,CCHLS14}.
In particular the latter have gained substantial popularity: they
offer a compact representation of a family of product behaviors,
individually modeled as labeled transition systems (LTS), in a single
transition system model in which actions are guarded by feature
expressions whose satisfaction (or not) indicate the presence (or
absence) of these actions in product behaviors.  This has resulted in
dedicated SPL model checkers~\cite{BMS12,CCHLS12,CCHSL13} as well as
the application of existing model checkers like
\texttt{N\MakeLowercase{u}SMV}~\cite{CCHLS14}, \mCRL~\cite{BDV14a} 
and \texttt{FMC}~\cite{BFGM15a} to SPL\@.

In~\cite{BDV14a,BDV14c,BBDV15}, we showed how the formal specification 
language \mCRL{} and its industrial-strength tool\-set can be
exploited to model and analyze SPL\@. The \mCRL{} tool\-set supports
parametrized modeling, model reduction and quality assurance
techniques like model checking. For more details, the reader is
referred to~\cite{CGKSVWW13,GM14} and \url{www.mcrl2.org}.  In
particular, we illustrated the use of \mCRL's parametrized data
language to model and select valid product configurations, in the
presence of feature attributes and quantitative constraints, and to
model and check the behavior of individually generated products (or of
a set of products, by tweaking the selection process).  Hence, the SPL
model-checking analyses with \mCRL{} studied so far fall in the
category of product-based analyses.  While we did equip our \mCRL{}
models of product families with an FTS-like semantics, to be able to
perform family-based verification also the supporting logic (a variant
of the first-order modal $\mu$-calculus augmented with data) needs to
be able to deal with the transitions of FTS labeled with feature
expressions.
 
The modal $\mu$-calculus $\muL$, going back to~\cite{Koz83}, is used
to express and model check properties interpreted over LTS, which
subsumes more intuitive popular temporal logics like LTL and CTL\@.
The model-checking approaches
of~\cite{CHSLR10,CCHLS12,CCHSL13,CCSHLR13,BLP13} are based
on~LTL, those of~\cite{LPT09,BMS12,CHSL11,CCHLS14,BFGM15b}
on~CTL and those of~\cite{LT12,BDV14a,BDV14c,BBDV15,BFGM15a,LMBR16} 
on the $\mu$-calculus.  In line with the recommendations from~\cite{ABDFS13}
to \lq\lq adopt and extend state-of-the-art analysis tools\rq\rq\ and
to \lq\lq analyze feature combinations corresponding to products of
the product line\rq\rq, in this paper we propose two variants of
$\muL$, coined $\muLf$ and $\muLfprime$, for FTS\@. For this, we
explicitly incorporate feature expressions into the logics, thus
allowing operators to single out transitions and behavior restricted
to specific products and subfamilies. We provide semantics for $\muLf$
and~$\muLfprime$ and relate the three logics to each other. 
Given that LTL and CTL are strict, partly overlapping subsets of $\muL$, 
each of the feature-oriented variants introduced in this paper can 
express properties that the approaches based on LTL or CTL cannot
(cf.~\cite{CGP99} for examples of such properties).

In line with the extensions feature LTL (fLTL)~\cite{CCSHLR13} and 
feature CTL (fCTL)~\cite{CCHLS14}, we extend the standard 
$\mu$-calculus to account for feature expressions that define the 
set of products over which a formula is to be verified. 
However, while fLTL and fCTL do not change the semantics of the 
temporal operators, but only limit or parametrize the set of products 
over which they are evaluated by the addition of a feature expression 
as quantifier or guard, we do change the semantics.
In detail, we replace operators $\mydiamond{a}$
and~$\mybox{a}$ of~$\muL$ by \lq{feature}\rq\ operators
$\mydiamond{a|\chi}$ and~$\mybox{a|\chi}$ for~$\muLf$ and
$\myruby{a|\chi}$ and~$\mybox{a|\chi}$ for~$\muLfprime$, for $\chi$ an
arbitrary feature expression. 
Intuitively, the classical diamond operator $\mydiamond{a} \mkern1mu \varphi$ 
(may modality) is valid if there exists an $a$-transition that leads to a state satisfying 
$\varphi$, while the classical box operator $[a] \mkern1mu \varphi$ 
(must modality) is valid if all $a$-transitions lead to a state where $\varphi$ is valid 
(i.e.\ if no such transition exists it holds trivially). 

The logic~$\muLf$ is product-oriented. Informally, a product~$p$ satisfies 
formula $\mydiamond{a|\chi} \mkern1mu \varphi$ with respect to an 
FTS~$F$ in some state, if $p$ meets~$\chi$ and an $a$-transition exists 
for~$p$ in~$F$ to a state where the formula~$\varphi$ holds
for~$p$. Similarly, $p$ satisfies $\mybox{a|\chi} \mkern1mu \varphi$
with respect to~$F$ in a state, if $p$ meets~$\chi$ and for all
$a$-transitions for~$p$ in~$F$ the formula~$\varphi$ holds for~$p$ in
the target state, or $p$~does not meet~$\chi$. So, $\mydiamond{a|\chi}
\mkern1mu \varphi$ \emph{does not} hold if $p$ does not satisfy the
feature expression~$\chi$, while $\mybox{a|\chi} \mkern1mu \varphi$
\emph{does} hold if $p$ does not satisfy 
$\chi$. 

The logic~$\muLfprime$ is family-oriented. The formula 
$\myruby{a|\chi} \mkern1mu \varphi$ holds for a set of products~$P$ 
with respect to FTS~$F$ in a state~$s$, if all products in~$P$ meet 
the feature expression~$\chi$ and there exists a single $a$-transition, 
possible for all products in~$P$, to a state where $\varphi$ holds 
for the set~$P$. In a way, the modality~$\myruby{a|\chi}$ of~$\muLfprime$ 
is a global variant of the local modality~$\mydiamond{a|\chi}$ of~$\muLf$. 
A formula $\mybox{a|\chi} \mkern1mu \varphi$ of~$\muLfprime$ holds 
in a state of~$F$ for a set of products~$P$, if for each subset~$P'$ 
of~$P$ for which an $a$-transition is possible, for all products of the
subset~$P'$ the formula~$\varphi$ holds for~$P'$ in the target
state. Note how, on the one hand, $\mybox{a|\chi} \mkern1mu \varphi$ 
is fulfilled for~$P$ if no product of~$P$ meets the feature expression~$\chi$. 
On the other hand, $\varphi$ is checked against subsets~$P'$ of~$P$, 
cut out by~$\chi$ and the feature expressions decorating the transitions 
of the FTS~$F$.

Jumping ahead, for a product~$p$ and a formula~$\varphi$ of~$\muLf$,
possibly involving feature expressions in the modalities, we have a
corresponding formula~$\varphi_{p}$ without feature expressions
in~$\muL$ and a corresponding formula~$\varphi'$ using
$\myruby{a|\chi}$ rather than~$\mydiamond{a|\chi}$ in~$\muLfprime$. 
In this paper, we will show
\begin{displaymath}
  \forall p \in P \colon {} \modelsFp \varphi_{p}
  \quad \stackrel{(i)}{\iff} \quad
  \forall p \in P \colon p \modelsF \varphi
  \quad \stackrel{(ii)}{\impliedby} \quad
  P \modelsFprime \varphi'
\end{displaymath}
for all sets of products~$P$ and given an FTS~$F$. Thus, (i)~$\varphi$
holds for~$p$ with respect to the FTS~$F$ iff $\varphi_p$ holds with
respect to the LTS~$F|p$, which is the projection of~$F$ on~$p$ 
obtained by including an $a$-transition in $F|p$ iff $p \in \gamma$ 
for a transition of~$F$ labelled with $a | \mkern0.5mu\gamma$. 
And~(ii), with respect to~$F$, $\varphi$~holds for all
products~$p$ in a family of products~$P$ if the formula~$\varphi'$
holds for the family~$P$. This provides us both with a correct
FTS-based semantics of~$\muLf$ with respect to a standard LTS-based
semantics, due to~(i), and with a possibility for family-based
model checking due to~(ii). However, the latter requires that we
have means to actually verify $\muLfprime$-formulas. We provide an
outline for this exploiting the \mCRL{} toolset. We sketch an
embedding of $\muLfprime$ into the $\mu$-calculus with data for a small
example.  We have already started to work on larger SPL models from
the literature, such as the well-known minepump model on which we 
evaluate our approach in a companion paper~\cite{BDVW16b}.

The paper is organized as follows. Section~\ref{epsilon} describes the
starting point of the research described in this paper: product-based
verification of product-based behavior with~$\muL$. In
Section~\ref{eta}, we introduce the $\muLf$-variant of~$\muL$ and
prove the soundness of its semantics, after which we introduce the
$\muLfprime$-variant in Section~\ref{zeta} and show how it can be used
for family-based verification of family-based behavior in
Section~\ref{sec-mc-zeta}. We then discuss our results and planned future
work in Section~\ref{conclusion}.


\section{Product-based behavior---product-based verification}
\label{epsilon}

In this section we recall the definition of an LTS and of a variant~$\muL$
of Kozen's modal $\mu$-calculus~\cite{Koz83}, and its semantics.

\halfblankline 

\begin{definition}
  \label{df-lts}
  An LTS $L$ over the set~$\calA$, the set of actions, is a triple $L
  = ( S ,\, {\rightarrow} \mkern1mu ,\, \sinit )$, with $S$ a finite
  set, the set of states, ${\rightarrow} \subseteq {S \times \calA
    \times S}$ the transition relation, and $\sinit \in S$ the initial
  state.
\end{definition}

\halfblankline

\noindent
The modal $\mu$-calculus involves modalities $\mydiamond{a}$
and~$\mybox{a}$, and fixpoint constructions. Its formulas are to be
interpreted over LTSs. See~\cite{BS01} for an overview.

\halfblankline

\begin{definition}
  Fix a set~$\calX$ of variables, ranged over by~$X$. The
  $\mu$-calculus~$\muL$ over $\calA$ and~$\calX$, containing
  formulas~$\varphi$, is given by
  \begin{displaymath}
    \varphi \bnfeq
    \begin{array}[t]{@{}l}
      \falsum \mid \verum \mid \neg \varphi \mid \varphi \lor \psi
      \mid \varphi \land \psi \mid 
      \mydiamond{a}\varphi \mid [a]\varphi \mid 
      X \mid \mu X . \varphi \mid \nu X . \varphi
    \end{array}
  \end{displaymath}
  where for $\mu X . \varphi$ and $\nu X . \varphi$ all free
  occurrences of~$X$ in~$\varphi$ are in the scope of an even number
  of negations.
\end{definition}

\halfblankline

\begin{example}
  Assuming $a, b, c \in \calA$, formulas of~$\muL$ include the following: 
\begin{itemize}
  \item [(i)] $\mydiamond{a} ( \, \mybox{b} \falsum \land
    \mydiamond{c} \verum \, )$ \lq\lq it is possible to do action~$a$
    after which action~$b$ is not possible but action~$c$ is\rq\rq
\item [(ii)] $\mu X . \mkern1mu ( \mydiamond{a} X \lor \mydiamond{b}
  \verum )$ \lq\lq there exists a finite sequence of $a$-actions
  followed by a $b$-action\rq\rq
\item [(iii)] $\nu X . \mkern1mu \bigl( \mu Y . \mkern1mu \mybox{a} Y
  \land \mybox{b} X \bigr)$ \lq\lq on all infinite runs with actions
  $a$ and~$b$, action~$b$ occurs infinitely often\rq\rq
\end{itemize}
\end{example}

\halfblankline

\noindent
The syntactic restriction for a variable~$X$ that is bound by a
fixpoint construct to occur in the scope of an even number of
negations guarantees monotonicity, with respect to set inclusion, of
functionals used for the semantic definition below. From the
monotonicity it follows by the Knaster-Tarski theorem that the least
fixpoint and greatest fixpoint of the functionals exist.

\halfblankline

\begin{definition}
  \label{df-eps-semantics}
  Let an LTS~$L$, with set of states~$S$, be given. Define $\sSet$,
  the set of state sets by $\sSet = \powersetS$, and define $\sEnv$,
  the set of state-based environments, by $\sEnv = \calX \to
  \sSet$. Then the semantic function $\bbL{\cdot} : \muL \to \sEnv \to
  \sSet$ is given by
  \begin{align*}
      \bbL{ \falsum }(\eps) & = \varnothing \\
      \bbL{ \verum }(\eps) & = S \\
      \bbL{ \neg \varphi }(\eps) & =
      S \setminus \bbL{ \varphi }(\eps) 
      \displaybreak[2] \\
      \bbL{ ( \varphi \lor \psi ) }(\eps) & =
      \bbL{ \varphi }(\eps)
      \cup
      \bbL{ \psi }(\eps)
      \\
      \bbL{ ( \varphi \land \psi ) }(\eps) & =
      \bbL{ \varphi }(\eps)
      \cap
      \bbL{ \psi }(\eps)
      \smallskip \displaybreak[3] \\
      \bbL{ \mydiamond{a} \varphi }(\eps) & =
      \lc s \mid 
      \exists \mkern2mu t \colon s \trans{a}{} t \land
      t \mathop{\in} \bbL{\varphi}(\eps) 
      \rc
      \\
      \bbL{ \mybox{a} \varphi }(\eps) & =
      \lc s \mid 
      \forall t \colon s \trans{a}{} t \Rightarrow 
      t \mathop{\in} \bbL{\varphi}(\eps)
      \rc
      \smallskip \displaybreak[3] \\
      \bbL{ X }(\eps) & =
      \eps(X) 
      \\
      \bbL{ \mu X . \varphi }(\eps) & =
      \lfp( \, U \mapsto \bbL{ \varphi }(\eps[U/X]) \, )
      \\
      \bbL{ \nu X . \varphi }(\eps) & =
      \gfp( \, U \mapsto \bbL{ \varphi }(\eps[U/X]) \, )
  \end{align*}
\end{definition}

\noindent
By~$\eps[U/X]$ we denote the environment in~$\sEnv$ which
yields~$\eps(Y)$ for variables~$Y$ different from~$X$ and which yields
the set~$U$ for the variable~$X$. For an LTS~$L$ with initial
state~$\sinit$ and a closed $\muL$-formula~$\varphi$, we write
$\modelsL \mkern-1mu \varphi$ iff $\sinit \in \bbL{\varphi}(\eps_0)$,
where the environment $\eps_0 \in \sEnv$ is such that $\eps_0(X) =
\varnothing$ for all $X \in \calX$.



\halfblankline

\noindent
There are various approaches to model checking of $\muL$~formulas on
an LTS, in particular exploiting BDDs
and using equation
systems or parity games 
(see~\cite{CGP99,BS01} for more details and references). The
papers~\cite{BDV14a,BDV14c,BBDV15} present a model-checking 
approach to SPL using~$\muL$ for a prototypical coffee
machine. Basically, the approach in these papers is product-based, in
the sense that it provides a non-deterministic choice of the product
space before entering product behavior captured by an LTS\@. Thus,
although some tricks are possible, behavior is specified product-based
as is the verification approach.


\section{Family-based behavior---product-based verification}
\label{eta}

In this section we extend the logic~$\muL$ to a logic~$\muLf$
incorporating feature expressions in its modalities. Formulas
of~$\muLf$ are to be evaluated against an FTS as first proposed 
in~\cite{CHSLR10}. We note that, for the ease of presentation in 
this paper, the definition of an FTS below is slightly more abstract.

We fix a finite non-empty set~$\calF$ of features and a set~$\calA$ of
actions, and we let~$\BExpF$ denote the set of Boolean expressions
over~$\calF$. We have~$\textit{\ttf\/}$ as a typical element
of~$\calF$. Elements~$\chi$ and~$\gamma$ of~$\BExpF$ are referred to
as feature expressions. The constant~$\verum$ is used to denote the
feature expression that is always true. A product is a set of
features; $\calP$ is the set of products, thus $\calP \subseteq
\bftwo^{\calF}$. We use~$p$ to range over~$\calP$. A product~$p$
induces an assignment~$\alpha$ of features, viz.\ $\alpha_p : \calF
\to \bools$ with $\alpha_p(\textit{\ttf\/}) = \TRUE$ iff
$\textit{\ttf\/} \in p$. We write $p \in \chi$ for $\alpha_p \models
\chi$. We also identify a feature expression~$\chi$ with the set of
products that satisfy~$\chi$.

\halfblankline

\begin{definition}
  An FTS~$F$ over $\calA$ and~$\calF$ is a triple $F = ( S ,\, \theta
  ,\, \sinit )$, with $S$ a finite set, the set of states,
  $\theta : S \times \calA \times S \to \BExpF$ the transition
  constraint function, and~$\sinit \in S$ the initial state.
\end{definition}

\halfblankline

\noindent
For states~$s, t \in S$, we write $s \transF{a}{\chi} t$ if
$\theta(s,a,t) = \chi$.
Given an FTS $F = ( S ,\, \theta ,\,
\sinit )$ and a product~$p \in \calP$, the projection of~$F$ with
respect to~$p$ is the LTS $F|p = ( S ,\, {\rightarrow}_{F|p} ,\,
\sinit)$ over $\calA$ with $s \trans{a}{F|p} t$ iff $p \in \gamma$ for
a transition $s \transF{a}{\gamma} t$ of~$F$.

\halfblankline

\begin{example}
  \label{ex-coffee-machine}
  Consider the FTS $F$ modeling a family of coffee machines, 
  an SPL of four products, with the set of features $\{C,D,E\}$ 
  representing the presence of a \emph{clean/descale} unit, 
  a \emph{dollar} unit and a \emph{euro} unit, depicted below (left). 
  The actions of $F$ represent inserting a coin (action~$\mathit{ins}$), 
  pouring a standard or large coffee (actions~$\mathit{sd}$ and 
  $\ell \mkern-2mu g$) and cleaning/descaling the machine 
  (action~$\mathit{cd}$). A coffee machine accepts either dollar coins 
  or euro coins. Large cups of coffee require two coins and are only 
  available for dollar machines. Cleaning can only occur when the 
  machine is idle with no coins inserted. For the products 
  $p_{1} = \singleton{C,D}$ and $p_{2} = \singleton{E}$, we have the 
  projections $F|p_1$ and~$F|p_2$, also depicted below (right).
\begin{center}
  \begin{tikzpicture}[>=stealth', every state/.style={draw, minimum 
      size=10pt, inner sep=1pt},node distance=40pt] 
    \node[state] (r0) {\scriptsize $s_0$};
    \node[state, right=of r0] (r1) {\scriptsize $s_1$};
    \node[state, right=of r1] (r2) {\scriptsize $s_2$};
    \path ([xshift=-0pt,yshift=30pt] r0) node {\scriptsize $F$}; 

    \draw[->] 
    (r0)++(0,-0.5) -- (r0)
    (r0) edge node[yshift=-2pt,above] {\scriptsize $\mathit{ins} |
      \mkern1mu \verum$} (r1) 
    (r1) edge[out=105,in=75] node[yshift=-2pt,above] {\scriptsize
      $\mathit{sd} | \mkern1mu \verum$} (r0)
    (r1) edge node[yshift=-2pt,above] {\scriptsize $\mathit{ins} |
      \mkern0mu D$} (r2) 
    (r2) edge[bend left] node[yshift=2pt,below] {\scriptsize ${\ell}
      \mkern-1.5mu {g} | \mkern1mu \verum$} (r0)
    (r0) edge[loop,min distance=10mm,out=210,in=150]
    node[left,xshift=3pt] {\scriptsize $\mathit{cd} | \mkern1mu C$} (r0);

    \node[state, right of=r2, xshift=+18pt] (s0) {\scriptsize $s_0$};
    \node[state, right=of s0] (s1) {\scriptsize $s_1$};
    \node[state, right=of s1] (s2) {\scriptsize $s_2$};
    \path ([xshift=-0pt,yshift=30pt] s0) node {\scriptsize $F|p_{1}$}; 

    \draw[->] 
    (s0)++(0,-0.5) -- (s0)
    (s0) edge node[yshift=-2pt,above] {\scriptsize $\mathit{ins}$} (s1) 
    (s1) edge[out=105,in=75] node[yshift=-0.5pt,above] {\scriptsize
      $\mathit{sd}$} (s0)
    (s1) edge node[yshift=-2pt,above] {\scriptsize $\mathit{ins}$} (s2) 
    (s2) edge[bend left] node[yshift=2pt,below] {\scriptsize ${\ell}
      \mkern-1.5mu {g}$} (s0)
    (s0) edge[loop,min distance=10mm,out=210,in=150]
    node[left,xshift=3pt] {\scriptsize $\mathit{cd}$} (s0);

    \node[state, right of=s2, xshift=-10pt] (t0) {\scriptsize $s_0$};
    \node[state, right=of t0] (t1) {\scriptsize $s_1$};
    \node[state,right =of t1] (t2) {\scriptsize $s_2$};
    \path ([xshift=-0pt,yshift=30pt] t0) node {\scriptsize $F|p_{2}$}; 

    \draw[->] 
    (t0)++(0,-0.5) -- (t0)
    (t0) edge node[yshift=-2pt,above] {\scriptsize $\mathit{ins}$} (t1)
    (t1) edge[out=105,in=75] node[yshift=-0.5pt,above] {\scriptsize
      $\mathit{sd}$} (t0) 
    (t2) edge[bend left] node[yshift=-2pt,below] {\scriptsize $\ell
      \mkern-1.5mu {g}$} (t0);

  \end{tikzpicture}
\end{center}
  The LTS~$F|p_1$ has all transitions of~$F$, with the feature
  expressions decorating the arrows omitted. For~$F|p_2$, the
  $\mathit{cd}$-loop that depends on the feature~$C$ and the second
  insert action~$\mathit{ins}$ of the transition from $s_1$ to~$s_2$,
  which depends on the feature~$D$, are not present. Although~$s_2$ 
  is unreachable in~$F|p_2$ according to the definition, since 
  $p_2 \in \verum$, the transition from $s_2$ to~$s_0$ is present.
\end{example}

\halfblankline

\noindent
We next introduce the feature $\mu$-calculus~$\muLf$.

\halfblankline

\begin{definition}
  \label{df-muLf}
  The feature $\mu$-calculus~$\muLf$ over $\calA$, $\calF \!$, and~$\calX$,
  containing formulas~$\varphif$, is given by
  \begin{displaymath}
    \varphif \bnfeq
    \begin{array}[t]{@{}l}
      \falsum \mid \verum \mid \neg \varphif \mid \varphif \lor \psif
      \mid \varphif \land \psif \mid 
      \mydiamond{a|\chi} \mkern1mu \varphif \mid 
      [a|\chi] \mkern1mu \varphif \mid 
      X \mid \mu X . \mkern1mu \varphif \mid 
      \nu X . \mkern1mu \varphif
    \end{array}
  \end{displaymath}
  where for $\mu X . \mkern1mu \varphif$ and $\nu X . \mkern1mu
  \varphif$ all free occurrences of~$X$ in~$\varphif$ are in the scope
  of an even number of negations.
\end{definition}

\halfblankline

\begin{example}
  Assuming feature expressions $\chi, \chi_1, \chi_2 \in \BExpF$,
  the following are example formulas of~$\muLf$:
  \begin{itemize}
  \item [(i)] $\mydiamond{a | \chi_1 \land \chi_2} ( \, \mybox{b |
    \chi_1} \falsum \land \mydiamond{c | \chi_2} \verum \, )$ \lq\lq it is
    possible to do action~$a$ for products satisfying $\chi_1 \land
    \chi_2$ after which action~$b$ is not possible for products
    satisfying~$\chi_1$ but action~$c$ is for products
    satisfying~$\chi_2$\rq\rq
  \item [(ii)] $\mu X . \mkern1mu ( \mydiamond{a|\chi_1} X \lor
    \mydiamond{b|\chi_2} \verum )$ \lq\lq there exists a finite
    sequence of $a$-actions for products satisfying~$\chi_1$, followed by
    a $b$-action for products satisfying~$\chi_2$ as well\rq\rq
  \item [(iii)] $\nu X . \mkern1mu \bigl( \mu Y . \mkern1mu \mybox{a
    | \chi } Y \land \mybox{b | \mkern1mu \verum } X \bigr)$ \lq\lq for
    products satisfying~$\chi$ infinitely often action~$b$ can occur,
    after a finite number of times action~$a$\rq\rq
  \end{itemize}
\end{example}

\halfblankline

\noindent
The semantic function $\bbF{\cdot}$ given below returns for an FTS~$F$
all state-product pairs in which a formula $\varphif \in \muLf$ holds.

\halfblankline

\begin{definition}
  \label{df-eta-semantics}
  Let an FTS~$F$, with set of states~$S$, be given. Define $\spSet
  \mkern0.5mu$, the set of state-product sets, by $\spSet =
  \powersetSP$, and define $\spEnv$, the set of state-product
  environments, by $\spEnv = \calX \to \spSet$. Then the semantic
  function $\bbF{\cdot} : \muLf \to \spEnv \to \spSet$ is given by
      \begin{align*}
      \bbF{ \falsum }(\eta) & = \varnothing \\
      \bbF{ \verum }(\eta) & = S \times \calP \\
      \bbF{ \neg \varphif }(\eta) & =
      (S \times \calP) \setminus \bbF{ \varphif }(\eta) 
      \displaybreak[2] \\
      \bbF{ ( \varphif \lor \psif ) }(\eta) & =
      \bbF{ \varphif }(\eta)
      \cup
      \bbF{ \psif }(\eta)
      \\
      \bbF{ ( \varphif \land \psif ) }(\eta) & =
      \bbF{ \varphif }(\eta)
      \cap
      \bbF{ \psif }(\eta)
      \smallskip \displaybreak[3] \\
      \bbF{ \mydiamond{a|\chi} \varphif }(\eta) & =
      \lc (s,p) \mid {p \mathop{\in} \chi} \land
      {} \\ & \qquad\ \qquad
      \bigl(
      \exists \mkern1mu \gamma, t \colon s \transF{a}{\gamma} t \land 
      p \mathop{\in} \gamma \land
      {(t,p) \mathop{\in} \bbF{\varphif}(\eta)}
      \bigr) \rc
      \\
      \bbF{ \mybox{a|\chi} \varphif }(\eta) & =
      \lc (s,p) \mid {p \mathop{\in} \chi} \Rightarrow
      {} \\ & \qquad\ \qquad
      \bigl(
      \forall \mkern1mu \gamma, t \colon s \transF{a}{\gamma} t \land 
      {p \mathop{\in} \gamma} \Rightarrow  
      {(t,p) \mathop{\in} \bbF{ \varphif}(\eta) }
      \bigr) \rc
      \smallskip \displaybreak[3] \\
      \bbF{ X }(\eta) & =
      \eta(X) 
      \\
      \bbF{ \mu X . \varphif }(\eta) & =
      \lfp( \, V \mapsto \bbF{ \varphif }(\eta[V/X]) \, )
      \\
      \bbF{ \nu X . \varphif }(\eta) & =
      \gfp( \, V \mapsto \bbF{ \varphif }(\eta[V/X]) \, )
      \end{align*}
\end{definition}


\noindent
For an FTS~$F$ with initial state~$\sinit$, a product $p \in \calP$,
and a closed $\muLf$-formula~$\varphif$, we write $p \modelsF \varphif$
iff $(\sinit \mkern1mu ,p) \in \bbF{\varphif}(\eta_0)$, where the
environment $\eta_0 \in \spEnv$ is such that $\eta_0(X) = \varnothing$
for all $X \in \calX$.

\blankline

\noindent
As we shall see, model checking of a $\muLf$-formula for an individual
product reduces to model checking a $\muL$-formula. To make this
precise we introduce the translation function $\smf : \muLf \times
\calP \to \muL$ as follows:
\begin{displaymath}
  \def\arraystretch{1.2}
  \begin{array}{r@{\:}c@{\:}lcr@{\:}c@{\:}l}
    \smf( \falsum , p ) & = & \falsum & \qquad &
    \smf( X , p ) & = & X \\
    \smf( \verum , p ) & = & \verum &&
    \smf( \mu X . \varphif , p ) & = & \mu X . \smf( \varphif , p ) \\
    \smf( \varphif \lor \psif , p ) & = & 
      \smf( \varphif) \lor \smf(\psif) &&
      \smf( \nu X . \varphif , p ) & = & \nu X . \smf( \varphif , p ) \\
    \smf( \varphif \land \psif , p ) & = & 
      \smf( \varphif) \land \smf(\psif) \\
    \multicolumn{7}{c}{%
      \smf( \mydiamond{a|\chi} \mkern1mu \varphif , p )
      \, = \,
      \myifthenelse{p \in \chi}{\mydiamond{a} \mkern1mu \smf( \varphif ,
       p )}{\falsum} 
   } \\
    \multicolumn{7}{c}{%
      \smf( \mybox{a|\chi} \mkern1mu \varphif , p )
      \, = \,
      \myifthenelse{p \in \chi}{\mybox{a} \mkern1mu \smf( \varphif ,  p
        )}{\verum} 
    } \\
  \end{array}
  \def\arraystretch{1.0}
\end{displaymath}
Thus, given a formula~$\varphif \in \muLf$, $\smf(\varphif)$ is the
$\muL$-formula obtained from~$\varphif$ by replacing a
subformula~$\mydiamond{a|\chi} \mkern1mu \psif$ by~$\falsum$ and a
subformula~$\mybox{a|\chi} \mkern1mu \psif$ by~$\verum$, respectively,
in case $p \notin \chi$, while omitting the feature expression~$\chi$
otherwise.

The semantic function $\bbFp{\cdot}$ in the theorem below is an
instance of Definition~\ref{df-eps-semantics}, returning for the
LTS~$F|p$ all states in which a formula $\varphi \in \muLf$ holds,
given an environment.

\halfblankline

\begin{theorem}
  \label{theorem-eta-semantics-vs-eps-semantics}
  Let a state $s \in S$ and a product $p \in \calP$ be
  given. Suppose the environments $\eta \in \spEnv$ and~$\eps \in
  \sEnv$ are such that $(s,p) \in \eta(X) \iff s \in
  \eps(X)$, for all $X \in \calX$. Then it holds that 
  \begin{displaymath}
    (s,p) \in \bbF{\varphif} (\eta) 
    \quad \text{iff} \quad 
    s \in \bbFp{\smf(\varphif) } (\eps)
  \end{displaymath}
  for all $\varphif \in \muLf$.
\end{theorem}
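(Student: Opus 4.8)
The plan is to prove the biconditional by structural induction on the formula $\varphif \in \muLf$. The induction hypothesis will be the full statement of the theorem applied to strict subformulas: namely, that for any state $s$, any product $p$, and any pair of environments $\eta, \eps$ agreeing in the sense $(s',p) \in \eta(X) \iff s' \in \eps(X)$ for all $X$ and all states $s'$, we have $(s,p) \in \bbF{\psif}(\eta)$ iff $s \in \bbFp{\smf(\psif)}(\eps)$. I would carry out the base cases $\falsum$, $\verum$, and $X$ first: for $X$ the claim is exactly the hypothesized agreement between $\eta$ and $\eps$, and for $\falsum, \verum$ it is immediate by comparing the two semantic clauses, noting that the relevant universe on the FTS side is $S \times \calP$ with $p$ fixed, matching $S$ on the LTS side. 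The Boolean cases $\neg, \lor, \land$ follow routinely, using that set-complement on the FTS side is taken relative to $S \times \calP$ and that $\smf$ commutes with the Boolean connectives.

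The interesting cases are the modalities. For $\mydiamond{a|\chi} \mkern1mu \psif$ I would split on whether $p \in \chi$. If $p \notin \chi$, then by definition $(s,p) \notin \bbF{\mydiamond{a|\chi}\psif}(\eta)$ since the defining clause requires $p \in \chi$, while $\smf(\mydiamond{a|\chi}\psif, p) = \falsum$, whose LTS-semantics is $\varnothing$, so both sides are false and agree. If $p \in \chi$, then $\smf$ yields $\mydiamond{a}\smf(\psif,p)$, and I must show $\exists \gamma, t \colon s \transF{a}{\gamma} t \land p \in \gamma \land (t,p) \in \bbF{\psif}(\eta)$ is equivalent to $\exists t \colon s \trans{a}{F|p} t \land t \in \bbFp{\smf(\psif)}(\eps)$. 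The bridge here is the definition of the projection $F|p$: we have $s \trans{a}{F|p} t$ iff there is a transition $s \transF{a}{\gamma} t$ of $F$ with $p \in \gamma$. This collapses the existential over $\gamma$ together with the guard $p \in \gamma$ into the single LTS-transition relation. The remaining conjunct $(t,p) \in \bbF{\psif}(\eta)$ versus $t \in \bbFp{\smf(\psif)}(\eps)$ is handled by the induction hypothesis applied at the target state $t$. The $\mybox{a|\chi}\mkern1mu\psif$ case is entirely dual, with $p \notin \chi$ making the FTS-clause vacuously true (matching $\smf(\cdots) = \verum$, with LTS-semantics $S$) and $p \in \chi$ reducing to the ordinary box via the same projection identity.

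The genuinely delicate part will be the fixpoint cases $\mu X . \psif$ and $\nu X . \psif$, where the subtlety is that the environments change during fixpoint iteration, so I cannot directly invoke the induction hypothesis at the fixed environments $\eta, \eps$. The right tool is to show that the two monotone functionals $V \mapsto \bbF{\psif}(\eta[V/X])$ on $\powersetSP$ and $U \mapsto \bbFp{\smf(\psif)}(\eps[U/X])$ on $\powersetS$ are intertwined via the relation $R$ defined by ``$V$ and $U$ agree at $p$,'' i.e.\ $(s',p) \in V \iff s' \in U$ for all $s'$. The key lemma I would establish is that if $V$ and $U$ agree at $p$ in this sense, then so do their images under the respective functionals; this is precisely where the induction hypothesis for $\psif$ is applied, after checking that the updated environments $\eta[V/X]$ and $\eps[U/X]$ still satisfy the agreement condition (they do, since updating both at $X$ preserves agreement at $X$ and leaves all other variables untouched). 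From this intertwining of monotone maps one argues that the least fixpoints (respectively greatest fixpoints) also agree at $p$. I expect this fixpoint transfer to be the main obstacle: one must verify that the ``agreement at $p$'' relation is closed under the relevant suprema and infima used in the Knaster--Tarski construction, so that agreement is preserved in the limit and not merely at each finite approximant. Concretely, I would note that the projection map sending a state-product set $V$ to its $p$-section $\{ s' \mid (s',p) \in V \}$ is a complete-lattice homomorphism commuting with arbitrary unions and intersections, which makes the fixpoint correspondence go through cleanly.
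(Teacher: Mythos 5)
Your proposal is correct, and away from the fixpoint cases it coincides with the paper's own proof: the same structural induction (with the agreement hypothesis correctly read as quantified over \emph{all} states, which is indeed needed to invoke the induction hypothesis at target states of transitions --- a point the paper leaves implicit), the same case split on $p \in \chi$ versus $p \notin \chi$ for $\mydiamond{a|\chi}$ and $\mybox{a|\chi}$, and the same use of the projection $F|p$ to absorb the existential over $\gamma$ together with the guard $p \in \gamma$ into the relation $\transFp{a}$. Where you genuinely diverge is in the treatment of $\mu X . \psif$ and $\nu X . \psif$, which is the core of the paper's proof. The paper unfolds the least fixpoints as unions of Kleene approximants $V_0 = \varnothing$, $V_{i+1} = \bbF{\psif}(\eta[V_i/X])$ and $U_0 = \varnothing$, $U_{i+1} = \bbFp{\smf(\psif,p)}(\eps[U_i/X])$, and proves by an inner induction on $i$ that the approximants and the updated environments stay in agreement at $p$; this implicitly relies on the finiteness of $S$ and $\calP$ (or on continuity) so that the $\omega$-chain actually reaches the fixpoint. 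You instead observe that ``agreement at $p$'' is exactly the graph of the section map $\pi_p(V) = \lc s' \mid (s',p) \in V \rc$ from $\powersetSP$ to $\powersetS$, derive the intertwining $\pi_p \circ f = g \circ \pi_p$ of the two monotone functionals from the induction hypothesis (after checking that updating both environments at $X$ preserves agreement), and appeal to fixpoint fusion: since $\pi_p$ preserves arbitrary unions and intersections, $\pi_p(\lfp(f)) = \lfp(g)$ and dually $\pi_p(\gfp(f)) = \gfp(g)$. Both routes are sound; yours buys generality (no appeal to finiteness or to Kleene continuity, so it would survive infinite state or product spaces) and disposes of the $\nu$-case by pure duality --- a case the paper does not even display --- whereas the paper's iteration argument is more elementary and self-contained, needing nothing beyond induction on the natural numbers.
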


\begin{proof}
  The theorem is proven by induction on the structure
  of~$\varphif$. Here we exhibit two cases.

  Case~1, $\mydiamond{a|\chi}$: Suppose $p \in \chi$. Then $\smf(
  \mydiamond{a|\chi} \mkern1mu \varphif , p ) = \mydiamond{a} \mkern1mu
  \varphi$, with $\smf( \varphif ) = \varphi$. We have
  \begin{displaymath}
    \begin{array}{rcl}
      \multicolumn{3}{l}{(s,p) \in \bbF{\mydiamond{a|\chi}
          \mkern1mu \varphif}(\eta)} 
      \\ & \iff &
      {p \mathop{\in} \chi} \land
      \bigl(
      \exists \mkern1mu \gamma, t \colon
      s \transF{a}{\gamma} t \land
      {p \mathop{\in} \gamma} \land
      {(t,p) \mathop{\in} \bbF{\varphif}(\eta)}
      \bigr)
      \\ && \qquad
      \text{(definition $\bbF{ \cdot }$)}
      \\ & \iff &
      \exists \mkern1mu t \colon
      s \transFp{a} t \land
      {t \mathop{\in} \bbFp{\varphi}(\eps)}
      \\ & & \qquad
      \text{($p \in \chi$, definition $F|p$, and induction hypothesis)}
      \\ & \iff &
      s \in \bbFp{\mydiamond{a|\chi} \varphi}(\eps)
      \\ && \qquad
      \text{(definition $\bbFp{ \cdot }$)}
    \end{array}
  \end{displaymath}
  Suppose $p \notin \chi$. Then $\smf( \mydiamond{a|\chi} \mkern1mu
  \varphif , p ) = \falsum$. We have $(s,p) \notin \bbF{
    \mydiamond{a|\chi} \mkern1mu \varphif }(\eta)$ by definition
  of~$\bbF{ \cdot }$ and $s \notin \bbFp{ \falsum }(\eps)$ by
  definition of~$\bbFp{ \cdot }$.

  Case~2, $\mu X . \varphif$: Then $\smf( \mu X . \varphif ) = \mu X
  . \varphi$, with $\smf( \varphif ) = \varphi$. If suffices to prove
  \begin{gather}
    (s,p) \in \bigcup_{i=0}^\infty \: V_i
    \quad \text{iff} \quad
    s \in \bigcup_{i=0}^\infty \: U_i
    \label{eq-sp-in-Vi-iff-s-in-Ui}
  \end{gather}
  where $V_0 = \varnothing$, $V_{i{+}1} = \bbF{\varphif}(\eta[V_i/X])$
  and $U_0 = \varnothing$, $U_{i{+}1} =
  \bbFp{\varphi}(\eps[U_i/X])$. Define $\eta_i$ and~$\eps_i$, for $i
  \geqslant 0$, by $\eta_0 = \eta$, $\eta_{i{+}1} = \eta[V_i/X]$ and
  $\eps_0 = \eps$, $\eps_{i{+}1} = \eps[U_i/X]$. We claim
  \begin{displaymath}
    \forall \mkern1mu Y \in \calX \colon (s,p) \in \eta_i(Y)
    \iff
    s \in \eps_i(Y)
    \quad \text{and} \quad
    (s,p) \in V_i \iff s \in U_i
  \end{displaymath}
  \emph{Proof of the claim.} Induction on~$i$. Basis, $i=0$: Clear, by
  definition of $V_0$ and~$U_0$ and by the assumption on $\eta$
  and~$\eps$. Induction step, $i{+}1$: We first check $(s,p) \in
  \eta_{i{+}1}(Y) \iff s \in \eps_{i{+}1}(Y)$, only
  for~$Y=X$.
  \begin{displaymath}
    \def\arraystretch{1.2}
   \begin{array}{l}
      {(s,p) \in \eta_{i{+}1}(X)}
      \iff 
      (s,p) \in \eta[V_i/X](X)
      \iff
      (s,p) \in V_i
      \iff
      {} \\ \qquad \qquad
      s \in U_i
      \  \text{(by induction hypothesis for~$i$)}
      \iff 
      s \in \eps[U_i/X](X)
      \iff
      s \in \eps_{i{+}1}(X)
    \end{array}
    \def\arraystretch{1.0}
  \end{displaymath}
  Next we verify $(s,p) \in V_{i{+}1} \iff s \in
  U_{i{+}1}$.
  \begin{displaymath}
    \def\arraystretch{1.2}
    \begin{array}{l}
      {(s,p) \in V_{i{+}1}} 
      \iff
      (s,p) \in \bbF{\varphif}(\eta[V_i/X])
      \iff
      (s,p) \in \bbF{\varphif}(\eta_{i{+}1})
      \iff
      {} \\ \qquad \qquad
      s \in \bbFp{\varphi}(\eps_{i{+}1})
      \  \text{(by induction hypothesis for $\varphif$)}
      \iff
      {} \\ \qquad \qquad \qquad \qquad 
      s \in \bbFp{\varphi}(\eps[U_i/X])
      \iff
      s \in U_{i{+}1}
    \end{array}
    \def\arraystretch{1.0}
  \end{displaymath}
  From the claim Equation~(\ref{eq-sp-in-Vi-iff-s-in-Ui}) follows
  directly. 
\end{proof}

\halfblankline

\noindent
From Theorem~\ref{theorem-eta-semantics-vs-eps-semantics} we have the
following immediate consequence:
\begin{equation}
  \modelsFp \smf( \varphif , p )
  \iff
  p \modelsF \varphif
  \label{eq-eta-semantics-vs-eps-semantics}
\end{equation}
for all $p \in \calP$, $\varphif \in \muLf$ closed.

\blankline

\noindent
Theorem~\ref{theorem-eta-semantics-vs-eps-semantics} and its
corollary show the strong connection of LTS model checking
for~$\muL$ and FTS model checking for~$\muLf$. In the next section we
introduce an adapted feature $\mu$-calculus, called~$\muLfprime$, for
which there is a looser relationship of FTS model checking for~$\muLf$
and FTS model checking for~$\muLfprime$ which only is sound for
formulas without negation. However, because of the duality of~$\muLf$
regarding modalities and fixpoints
shown in the sequel of this section, the latter is not a major
restriction.

We note that although the behavior of products is specified from the
family perspective by means of an FTS, model checking based on
Definition~\ref{df-eta-semantics} will have single products as unit of
granularity. However, one may take advantage of compact
representations when building up pieces of information during the
recursive exploration of subformulas and groups of products. In
particular, as in~\cite{CHSL11,CCHLS14}, BDDs can be used to
efficiently represent subsets of~$S \times \calP$.

\halfblankline

\begin{lemma}
  \label{lemma-duality}
  It holds that 
  \begin{itemize}
  \item [(a)] $\bbF{ \neg \mybox{a|\chi} \varphi }(\eta) = \bbF{
    \mydiamond{a|\chi} \mkern1mu \neg \varphi }(\eta)$ \, and \, $\bbF{ \neg
    \mydiamond{a|\chi} \varphi }(\eta) = \bbF{ \mybox{a|\chi} \neg
    \varphi }(\eta)$, \ for all $a \in \calA$, $\chi \in \BExpF$,
    $\varphi \in \muLf$, $\eta \in \spEnv$, and
  \item [(b)] $\bbF{ \neg \mu X . \varphi }(\eta) = \bbF{ \nu X .
    \varphi \mkern1mu [ \neg X / X ] \mkern1mu }(\eta)$ \, and \, $\bbF{
    \neg \nu X . \varphi }(\eta) = \bbF{ \mu X . \varphi \mkern1mu [
      \neg X / X ] \mkern1mu }(\eta)$, \ for all ${X \in \calX}$, $\varphi
    \in \muLf$ and $\eta \in \spEnv$.
  \end{itemize}
\end{lemma}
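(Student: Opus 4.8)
The plan is to handle the two items by different means: item~(a) by directly unfolding the semantic clauses of Definition~\ref{df-eta-semantics}, and item~(b) by a lattice-theoretic fixpoint-duality argument. For item~(a) I would compute the complement of the box-set and match it against the diamond-of-negation set. Fix a pair $(s,p)$. By the negation clause, $(s,p)\in\bbF{\neg\mybox{a|\chi}\varphi}(\eta)$ iff $(s,p)\notin\bbF{\mybox{a|\chi}\varphi}(\eta)$, i.e.\ iff the defining guarded implication of the box fails. Since the negation of an implication $A\Rightarrow B$ is $A\land\neg B$, this failure amounts to $p\in\chi$ together with the existence of $\gamma,t$ such that $s\transF{a}{\gamma}t$, $p\in\gamma$ and $(t,p)\notin\bbF{\varphi}(\eta)$; rewriting the last conjunct as $(t,p)\in\bbF{\neg\varphi}(\eta)$ via the negation clause yields exactly the defining condition of $\bbF{\mydiamond{a|\chi}\neg\varphi}(\eta)$. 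The second equality in~(a) follows symmetrically, or by applying the first to $\neg\varphi$ and using $\bbF{\neg\neg\psi}(\eta)=\bbF{\psi}(\eta)$. The only delicate point is the asymmetric role of the guard $p\in\chi$, which is a conjunct in the diamond but a hypothesis in the box.

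For item~(b) the engine is the classical De~Morgan duality for fixpoints on the complete lattice $(\powersetSP,\subseteq)$. Complementation $C\colon U\mapsto(S\times\calP)\setminus U$ is an order-reversing involution, so it carries the least fixpoint of any monotone operator $\Phi$ to the greatest fixpoint of the conjugate $\Phi^{\partial}=C\circ\Phi\circ C$, and conversely. Taking $\Phi(V)=\bbF{\varphi}(\eta[V/X])$, which is monotone because $X$ occurs in $\varphi$ under an even number of negations (so Knaster--Tarski applies), we have $\bbF{\mu X.\varphi}(\eta)=\lfp(\Phi)$ and hence $\bbF{\neg\mu X.\varphi}(\eta)=C(\lfp(\Phi))=\gfp(\Phi^{\partial})$. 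It remains to recognise the operator underlying the right-hand side as $\Phi^{\partial}$. For this I would prove, by induction on the structure of $\varphi$, the substitution lemma $\bbF{\varphi[\neg X/X]}(\eta[V/X])=\bbF{\varphi}(\eta[\complement{V}/X])$ for all $V\in\spSet$; together with the negation clause this identifies that operator with $V\mapsto C(\bbF{\varphi}(\eta[\complement{V}/X]))=\Phi^{\partial}(V)$, whence its greatest fixpoint is $\gfp(\Phi^{\partial})$. The second equality in~(b) is obtained dually, swapping $\lfp$ and $\gfp$.

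I expect the substitution lemma to be the main obstacle. Its modal and Boolean cases are routine, but the fixpoint cases require care that substituting $\neg X$ for the free occurrences of $X$ does not disturb the interpretation of an inner bound variable $Y\neq X$, and one must track how the even-negation restriction is preserved along the way so that every operator encountered stays monotone and Knaster--Tarski remains applicable. Once the substitution lemma and the conjugate identity $\Phi^{\partial}=C\circ\Phi\circ C$ are in place, the order-reversing involution $C$ transports the least and greatest fixpoints into one another and delivers~(b) at once.
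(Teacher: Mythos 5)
Your part~(a) is correct and is essentially the paper's own proof: the paper likewise obtains both equalities by unfolding the semantic clauses of Definition~\ref{df-eta-semantics} and negating the guarded implication defining the box (it computes the second equality directly by complementation rather than deriving it from the first via double negation, but that difference is immaterial).

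For part~(b) the paper gives no detail at all (``part~(b) being exactly as standard''), and your lattice-theoretic skeleton --- complementation $C$ as an order-reversing involution exchanging $\lfp(\Phi)$ with $\gfp(C \circ \Phi \circ C)$, plus a substitution lemma --- is indeed the standard argument being alluded to. However, the step in which you ``recognise the operator underlying the right-hand side as $\Phi^{\partial}$'' has a genuine gap. By your own substitution lemma, the operator determined by $\nu X . \varphi \mkern1mu [\neg X / X]$ is $V \mapsto \bbF{\varphi}(\eta[\mkern1mu \complement{V} / X]) = \Phi(C(V))$, i.e.\ $\Phi \circ C$, \emph{not} $C \circ \Phi \circ C$: the outer complementation you import ``together with the negation clause'' has no negation in the formula to act on. This mismatch cannot be repaired for the statement read literally, because that statement is false: taking $\varphi = \falsum$, the left-hand side of the first identity in~(b) is $S \times \calP$ while the right-hand side is $\varnothing$, and these differ as soon as $\calP \neq \varnothing$. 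Moreover, substituting $\neg X$ for $X$ puts every free occurrence of $X$ under an odd number of negations, so $\nu X . \varphi \mkern1mu [\neg X / X]$ violates the paper's own syntactic restriction, and $\Phi \circ C$ is antitone, so Knaster--Tarski does not even guarantee that its greatest fixpoint exists. What is true --- and what the standard duality asserts --- is $\bbF{\neg \mu X . \varphi}(\eta) = \bbF{\nu X . \neg \varphi \mkern1mu [\neg X / X]}(\eta)$ (and dually for $\nu$): the outer negation on the body supplies exactly the missing application of $C$ and restores the even-negation condition. In effect you are proving this corrected statement --- the lemma as printed appears to drop that negation --- and your argument becomes sound once you flag this explicitly and carry the outer negation through the identification of the right-hand operator with $\Phi^{\partial}$.
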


\begin{proof}
  We only present part~(a), part~(b) being exactly as standard. We
  have
  \smallskip \\
  \begin{displaymath}
    \def\arraystretch{1.2}
    \begin{array}[b]{rcl}
      \multicolumn{3}{l}{\bbF{ \neg \mybox {a|\chi} \varphi }(\eta)}  
      \\ &\!\!=\!\!\!& 
      (S \times\!\calP) {\setminus}
      \lc (s,p) \mid {p \mathop{\in} \chi} \Rightarrow\!
      \bigl( \forall \mkern1mu \gamma, t \colon
      s \transF{a}{\gamma} t \land
      {p \mathop{\in} \gamma} \Rightarrow\!
      {(t,p)\!\mathop{\in} \bbF{ \varphi }(\eta)}
      \bigr) \rc
      \qquad \text{(by\,definition\,$\bbF{ \cdot }$)}
      \\[0.15em] &\!\!=\!\!\!&
      \lc (s,p) \mid 
      {p \mathop{\in} \chi} \land
      \bigl( 
      \exists \mkern1mu \gamma, t \colon s \transF{a}{\gamma} t \land
      {p \mathop{\in} \gamma} \land
      {(t,p)\!\mathop{\notin} \bbF{ \varphi }(\eta)}
      \bigr) \rc
      \\[0.35em] &\!\!=\!\!\!&
      \bbF{ \mydiamond{a|\chi} \mkern1mu \neg \varphi }(\eta)
      \qquad
      \text{(by definition $\bbF{ \cdot }$)}
    \end{array}
    \def\arraystretch{1.0}
  \end{displaymath}
  $\ \ \bbF{ \neg \mydiamond {a|\chi} \varphi}(\eta)$
  \vspace*{-.85\baselineskip}
    \begin{align*}
      &=\,\ 
      (S \times \calP) {\setminus}
      \lc (s,p) \mid {p \mathop{\in} \chi} \land
      \bigl( \exists \mkern1mu \gamma, t \colon
      s \transF{a}{\gamma} t \land
      {p \mathop{\in} \gamma} \Rightarrow\!
      {(t,p)\!\mathop{\in} \bbF{ \varphi }(\eta)}
      \bigr) \rc 
      \qquad \text{(by\,definition\,$\bbF{ \cdot }$)}
      \\ &=\,\ 
      \lc (s,p) \mid 
      {p \mathop{\notin} \chi} \lor
      \bigl( 
      \forall \mkern1mu \gamma, t \colon s \transF{a}{\gamma} t \land
      {p \mathop{\in} \gamma} \land
      {(t,p)\!\mathop{\notin} \bbF{ \varphi }(\eta)}
      \bigr) \rc
      \\[0.35em] &=\,\ 
      \bbF{ [a|\chi] \mkern1mu \neg \varphi }(\eta)
      \qquad
      \text{(by definition $\bbF{ \cdot }$)}
    \qedhere\end{align*}
\end{proof}

\halfblankline

\noindent
Recall that in~$\varphif$, all free occurrences of variables used 
in fixpoints are in the scope of an even number of negations.
The duality of the modalities and of the fixpoint operators,
together with the De Morgan laws, allow for (closed) formulas to
\lq{push}\rq\ negation inside without effecting the meaning of a
formula. Therefore, in~$\muLf$, every formula~$\varphif$ has a
negation-free equivalent formula~$\psif$.


\section{Family-based behavior---family-based verification}
\label{zeta}

In this section we consider a variation of the $\mu$-calculus~$\muLf$
introduced in Definition~\ref{df-muLf} that allows for model checking
taking sets of products as point of view.
Modalities~$\mydiamond{a|\chi}$ of~$\muLf$ will be replaced by
modalities~$\myruby{a|\chi}$. The point is that the transition
required in the semantics of~$\myruby{a|\chi}$ in the adapted
$\mu$-calculus requires the existence of a specific transition in the
underlying FTS that applies for all products under consideration at
the same time.

\halfblankline

\begin{definition}
  The feature $\mu$-calculus~$\muLfprime$ over $\calA$, $\calF$ and~$\calX$,
  consisting of formulas~$\varphi'$, is given by
  \begin{displaymath}
    \varphi' \bnfeq
    \begin{array}[t]{@{}l}
      \falsum \mid \verum \mid \neg \varphi' \mid \varphi' \lor \psi'
      \mid \varphi' \land \psi' \mid 
      \myruby{a|\chi} \mkern1mu \varphi' \mid 
      [a|\chi] \mkern1mu \varphi' \mid 
      X \mid \mu X . \mkern1mu \varphi' \mid 
      \nu X . \mkern1mu \varphi'
    \end{array}
  \end{displaymath}
  where for $\mu X .\mkern1mu \varphi'$ and $\nu X . \mkern1mu
  \varphi'$ all free occurrences of~$X$ in~$\varphi'$ are in the scope
  of an even number of negations.
\end{definition}

\halfblankline

\noindent
The semantics of~$\muLfprime$ is given in terms of product families,
i.e.\ sets of products. Therefore, we consider now sets of
state-family pairs $\lbrace (s_i,P_i) \mid i \in I \rc$, for some
index set~$I$, i.e.\ pairs $(s_i,P_i)$ of states and sets of products, 
rather than sets of state-product pairs $\lbrace (s_i,p_i) \mid i \in I \rc$. 
We put $\sPSet = \powerset{\mkern1mu S \times \powersetP}$.

\halfblankline

\begin{definition}
  \label{df-zeta-semantics}
  Let an FTS~$F$, with set of states~$S$, be given. Define $\sPEnv$,
  the set of state-family envi\-ron\-ments, by $\sPEnv = \calX \to
  \sPSet$. Then the semantic function $\bbFF{\cdot}\!: \muLfprime \to
  \sPEnv \to \sPSet$ is\,given\,by
  \begin{align*}
      \bbFF{ \falsum }(\zeta) & = \varnothing \\
      \bbFF{ \verum }(\zeta) & = S \times \powersetP  \\ 
      \bbFF{ \neg \varphi' }(\zeta) & =
      ( S \times \powersetP )
      \setminus \bbFF{ \varphi' }(\zeta) 
      \displaybreak[2] \\ 
      \bbFF{ ( \varphi' \lor \psi' ) }(\zeta) & =
      \bbFF{ \varphi' }(\zeta)
      \cup
      \bbFF{ \psi' }(\zeta)
      \\
      \bbFF{ ( \varphi' \land \psi' ) }(\zeta) & =
      \bbFF{ \varphi' }(\zeta)
      \cap
      \bbFF{ \psi' }(\zeta)
      \smallskip \displaybreak[3] \\
      \bbFF{ \myruby{a|\chi} \mkern1mu \varphi' }(\zeta) & =
      \lc (s,P) \mid 
      {P \subseteq \chi} \land
      \bigl(
      \exists \mkern1mu \gamma, t \colon s
      \transF{a}{\gamma} t \land 
      {} \\ & \qquad\ \quad\ \quad
      P \subseteq \gamma \land
      (t, P \cap \chi \cap \gamma) \in \bbFF{\varphi'}(\zeta) 
      \bigr) \rc 
      \\
      \bbFF{ \mybox{a|\chi} \mkern1mu \varphi' }(\zeta) & =
      \lc (s,P) \mid 
      P \cap \chi \neq \varnothing \Rightarrow
      \bigl(
      \forall \mkern1mu \gamma, t \colon s \transF{a}{\gamma} t 
      \land 
      {} \\ & \qquad\ \quad\ \quad
      {P \cap \chi \cap \gamma} \neq \varnothing \Rightarrow  
      (t, P \cap \chi \cap \gamma) \in \bbFF{ \varphi'}(\zeta) 
      \bigr) \rc
      \displaybreak[3] \\
      \bbFF{ X }(\zeta) & =
      \zeta(X) 
      \\
      \bbFF{ \mu X . \mkern1mu \varphi' }(\zeta) & =
      \lfp( \, W \mapsto \bbFF{ \varphi' }(\zeta[W/X]) \, )
      \\
      \bbFF{ \nu X . \mkern1mu \varphi' }(\zeta) & =
      \gfp( \, W \mapsto \bbFF{ \varphi' }(\zeta[W/X]) \, )
  \end{align*}
\end{definition}

\halfblankline

\noindent
Well-definedness of the semantic function $\bbFF{ \cdot }$ can be
checked as usual.

For an FTS~$F$ with initial state~$\sinit$, a set of products $P
\subseteq \calP$ and a closed $\muLfprime$-formula~$\varphi'$, 
we write $P \modelsFprime \varphi'$ iff $(\sinit \mkern1mu ,P) \in
\bbF{\varphi'}(\zeta_0)$, where the environment $\zeta_0 \in \sPEnv$
is such that $\zeta_0(X) = \varnothing$ for all $X \in \calX$.

The main difference between $\muLf$ and~$\muLfprime$ lies in their
respective \lq{diamond}\rq\ modalities, $\mydiamond{a|\chi}$ for~$\muLf$
and $\myruby{a|\chi}$ for~$\muLfprime$. Here, for $\myruby{a|\chi}
\mkern1mu \varphi'$ of~$\muLfprime$ and $(s,P) \in \sPSet$ to hold, we
require all products in~$P$ to fulfill the feature expression~$\chi$,
and we require that the witnessing transition $s \transF{a}{\gamma}
t$, guarded by the feature expression~$\gamma$, is the same for all
products in~$P$.

\halfblankline

\begin{example}\label{counter-ex-duality} 
  Consider the FTS given on the right. 
\parpic[r]{%
  \begin{tikzpicture}[>=stealth', every state/.style={draw, minimum 
      size=10pt, inner sep=1pt},node distance=35pt] 
    \node (sx) {};
    \node[state, above=of sx, yshift=-6pt] (s0) {\scriptsize $s_0$};
    \node[state, left=of sx] (s1) {\scriptsize $s_1$};
    \node[state, right=of sx] (s2) {\scriptsize $s_2$};
    \draw[->] 
    (s0)++(0,+0.5) -- (s0)
    (s0) edge node[above left, xshift=5pt, yshift=-2pt] {\scriptsize
      $a|\ttf$} (s1)  
    (s0) edge node[above right, xshift=-5pt, yshift=-2pt] {\scriptsize
      $a|\neg \ttf$} (s2)  
    ;
  \end{tikzpicture} 
}
  The transition from $s_0$ to~$s_1$ 
  is possible for products having the feature~$\ttf$ and 
  the transition from $s_0$ to~$s_2$ for products not having this
  feature. Consider the products $p_1 = \singleton{ \ttf ,\, \ttg \, }$
  and $p_2 = \singleton{ \ttg \, }$. Thus, $p_2$~does not have
  feature~$\ttf$. Then $p_1 \modelsF \mydiamond{a|\verum} \verum$
  because of the transition to~$s_1$, and $p_2 \modelsF
  \mydiamond{a|\verum} \verum$ because of the transition
  to~$s_2$. However, we do \emph{not} have $\singleton{ p_1 ,\, p_2 }
  \modelsFprime \myruby{a|\verum} \verum$, since there is no
  $a$-transition from~$s_0$ possible for both $p_1$ and~$p_2$.
  Note that we do not have $\singleton{ p_1 ,\, p_2 } \modelsFprime
  \mybox{a|\verum} \falsum$ either, showing that the modalities
  $\myruby{a|\chi}$ and~$\mybox{a|\chi}$ of~$\muLfprime$ are not each
  others dual.
\end{example}

\halfblankline

\noindent
Despite this example, the semantics of
Definition~\ref{df-zeta-semantics} based on sets of state-family pairs
is in several ways consistent with the semantics of
Definition~\ref{df-eta-semantics} based on sets of state-product
pairs, as we will see next.

A formula $\varphi' \in \muLfprime$ has a corresponding formula
$\varphif \in \muLf$: the formulas $\varphi'$ and~$\varphif$ are the
same except that one involves a modality $\myruby{a|\chi}$ where
the other uses~$\mydiamond{a|\chi}$. More specifically, we define a
translation function $\fmf : \muLfprime \to \muLf$ by
\begin{displaymath}
  \def\arraystretch{1.2}
  \begin{array}{r@{\,}c@{\,}lcr@{\,}c@{\,}lcr@{\,}c@{\,}l}
    \fmf( \falsum ) & = & \falsum 
    &&
    \fmf( \varphi' \land \psi' ) & = & 
    \fmf( \varphi') \land \fmf(\psi') 
    &&
    \fmf( X ) & = & X \\
    \fmf( \verum ) & = & \verum &&
    \fmf( \myruby{a|\chi} \mkern1mu \varphi' )
    & = &
    \mydiamond{a|\chi} \mkern1mu \fmf( \varphi' ) &&
    \fmf( \mu X . \mkern1mu \varphi' ) & = & \mu X . \mkern1mu \fmf(
    \varphi' ) \\ 
    \fmf( \varphi' \lor \psi' ) & = & 
    \fmf( \varphi') \lor \fmf(\psi') &&
    \fmf( \mybox{a|\chi} \mkern1mu \varphi' )
    & = &
    \mybox{a|\chi} \mkern1mu \fmf( \varphi' ) &&
    \fmf( \nu X . \mkern1mu \varphi' ) & = & \nu X . \mkern1mu \fmf(
    \varphi' ) \\  
  \end{array}
  \def\arraystretch{1.0}
\end{displaymath}
Moreover, for an environment~$\zeta \in \sPEnv$ and an
environment~$\eta \in \spEnv$ we say that $\zeta$~relates to~$\eta$
if ${(s,\singleton{p}) \in \zeta(X)} \iff {(s,p) \in \eta(X)}$, for
all variables~$X \in \calX$.

\halfblankline

\begin{lemma}
  \label{lemma-zeta-singleton-vs-eta}
  Let an FTS~$F$, with set of states~$S$, a state $s \in S$ and a product
  $p \in \calP$, be given. Suppose the environments $\zeta \in \sPEnv$
  and $\eta \in \spEnv$ are such that $\zeta$~relates to~$\eta$. Then
  it holds that
  \begin{displaymath}
    (s,\singleton{p}) \in \bbFF{\varphi'}(\zeta)
    \quad \text{iff} \quad
    (s,p) \in \bbF{ \fmf( \varphi' )}(\eta)
  \end{displaymath}
  for all $\varphi' \in \muLfprime$.
\end{lemma}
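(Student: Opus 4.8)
The plan is to proceed by induction on the structure of $\varphi'$, reading the relating condition between $\zeta$ and $\eta$ as holding at every state-singleton pair (i.e.\ $(s',\singleton{p'}) \in \zeta(X) \iff (s',p') \in \eta(X)$ for all $s',p',X$), so that the induction hypothesis may be invoked at whichever state-product pair the semantics leads us to. I would first dispatch the base cases and Boolean connectives. For $\falsum$ and $\verum$ the two sides are simply both false resp.\ both true, using that $(s,\singleton{p}) \in S \times \powersetP$ exactly when $(s,p) \in S \times \calP$. For $X$ the equivalence is precisely the relating hypothesis, since $\fmf(X) = X$. Negation, conjunction and disjunction follow directly from the induction hypothesis, the only point to watch being that the complement in $\bbFF{\cdot}$ is taken relative to $S \times \powersetP$ whereas in $\bbF{\cdot}$ it is relative to $S \times \calP$; this is harmless, exactly because $(s,\singleton{p})$ lies in the former iff $(s,p)$ lies in the latter.

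The crux is the modal case, where the singleton structure is essential. I would expand $(s,\singleton{p}) \in \bbFF{\myruby{a|\chi}\varphi'}(\zeta)$ via Definition~\ref{df-zeta-semantics} and observe that for $P = \singleton{p}$ the family conditions collapse: $P \subseteq \chi$ becomes $p \in \chi$, $P \subseteq \gamma$ becomes $p \in \gamma$, and---crucially---under these two conditions the restricted family $P \cap \chi \cap \gamma$ equals $\singleton{p}$ again. Hence the witness clause reduces to $\exists\, \gamma, t \colon s \transF{a}{\gamma} t \land p \in \gamma \land (t,\singleton{p}) \in \bbFF{\varphi'}(\zeta)$, and applying the induction hypothesis at $(t,p)$ turns the last conjunct into $(t,p) \in \bbF{\fmf(\varphi')}(\eta)$. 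The resulting condition is exactly membership of $(s,p)$ in $\bbF{\mydiamond{a|\chi}\fmf(\varphi')}(\eta)$, which closes the case since $\fmf(\myruby{a|\chi}\varphi') = \mydiamond{a|\chi}\fmf(\varphi')$. The box case proceeds identically: for $P = \singleton{p}$ the outer guard $P \cap \chi \neq \varnothing$ becomes $p \in \chi$, the inner guard $P \cap \chi \cap \gamma \neq \varnothing$ becomes (under $p \in \chi$) just $p \in \gamma$, and again $P \cap \chi \cap \gamma = \singleton{p}$, so after the induction hypothesis the condition matches membership of $(s,p)$ in $\bbF{\mybox{a|\chi}\fmf(\varphi')}(\eta)$. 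I expect this collapse of the set-restriction $P \cap \chi \cap \gamma$ to a singleton to be the single point where the proof genuinely uses that $P$ is a singleton, and thus the heart of the lemma.

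Finally, the fixpoint cases follow the pattern of Case~2 in the proof of Theorem~\ref{theorem-eta-semantics-vs-eps-semantics}. For $\mu X.\varphi'$ I would write $\bbFF{\mu X.\varphi'}(\zeta) = \bigcup_{i=0}^{\infty} W_i$ and $\bbF{\fmf(\mu X.\varphi')}(\eta) = \bbF{\mu X.\fmf(\varphi')}(\eta) = \bigcup_{i=0}^{\infty} V_i$, with $W_0 = V_0 = \varnothing$, $W_{i+1} = \bbFF{\varphi'}(\zeta[W_i/X])$ and $V_{i+1} = \bbF{\fmf(\varphi')}(\eta[V_i/X])$; these approximants exhaust the fixpoints because the lattices $\sPSet$ and $\spSet$ are finite, $\calF$ and hence $\calP$ being finite. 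I would then show by an inner induction on $i$ that $W_i$ and $V_i$ relate (i.e.\ $(s',\singleton{p'}) \in W_i \iff (s',p') \in V_i$ for all $s',p'$), whence $\zeta[W_i/X]$ relates to $\eta[V_i/X]$ and the outer induction hypothesis on $\varphi'$ yields the same for $W_{i+1}$ and $V_{i+1}$. The greatest-fixpoint case $\nu X.\varphi'$ is dual, starting the approximation from the top elements $S \times \powersetP$ and $S \times \calP$. Beyond the modal collapse, the only bookkeeping obstacle here is to keep the two universes of complementation straight and to check that the relating property is inherited by the updated environments at each approximation level, both of which are routine.
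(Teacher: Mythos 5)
Your proof is correct and follows essentially the same route as the paper's: structural induction with the key observation that for $P = \singleton{p}$ the conditions $P \subseteq \chi$, $P \subseteq \gamma$ (resp.\ the nonemptiness guards) collapse to $p \in \chi$, $p \in \gamma$ and $P \cap \chi \cap \gamma = \singleton{p}$, plus the approximant-chain argument with the ``relates'' invariant for the fixpoint cases. The only differences are presentational: you spell out the Boolean, box and $\nu$ cases that the paper omits, and you make explicit the global reading of the relating condition and the finiteness justification for the Kleene approximation, both of which the paper leaves implicit.
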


\begin{proof}
  The proof is by induction on the structure of~$\varphi'$. We only
  treat two cases.

  Case~1, $\myruby{a|\chi}$: Recall $\fmf( \myruby{a|\chi} \mkern1mu
  \varphi' ) = \mydiamond{a|\chi} \mkern1mu \varphif$ if $\fmf(
  \varphi' ) = \varphif$. We have
  \begin{displaymath}
    \def\arraystretch{1.2}
    \begin{array}{rcl}
      \multicolumn{3}{l}{(s,p) \in \bbFF{ \myruby{a|\chi} \mkern1mu
          \varphi' }(\zeta)}
      \\ & \iff &
      {\singleton{p} \subseteq \chi} \land
      \bigl(
      \exists \mkern1mu \gamma, t \colon 
      s \transF{a}{\gamma} t \land
      {\singleton{p} \subseteq \gamma} \land
      (t, \singleton{p} \cap \chi \cap \gamma) \in \bbFF{ \varphi'
      }(\zeta) 
      \bigr)
      \\ && \qquad
      \text{(by definition of $\bbFF{ \cdot }$)}
      \\ & \iff &
      {p \in \chi} \land
      \bigl(
      \exists \mkern1mu \gamma, t \colon 
      s \transF{a}{\gamma} t \land
      {p \in \gamma} \land
      (t,p) \in \bbF{ \varphif
      }(\eta) 
      \bigr)
      \\ && \qquad
      \text{($\singleton{p} \cap \chi \cap \gamma = \singleton{p}$ and
        by induction hypothesis)}
      \\ & \iff &
      (s,p) \in \bbF{ \mydiamond{a|\chi} \mkern1mu \varphif }(\eta)
      \\ && \qquad
      \text{(by definition of $\bbF{ \cdot }$)}
    \end{array}
    \def\arraystretch{1.0}
  \end{displaymath}
  Case~2, $\mu X . \mkern1mu \varphi'$: We have $\fmf( \mu X
  . \mkern1mu \varphi' ) = \mu X . \mkern1mu \varphif$ if $\fmf(
  \varphi' ) = \varphif$. Put $W_0 = \varnothing$, $W_{i{+}1} = \bbFF{
    \varphi ' }(\zeta[ W_i/X ] )$ and $V_0 = \varnothing$, $V_{i{+}1}
  = \bbF{ \varphif }(\eta[ V_i/X ])$. We claim
  \begin{equation}
    \label{eqn-claim-zeta-singleton-vs-eta}
    {(s,\singleton{p}) \in W_i} 
    \iff
    {(s,p) \in V_i}
    \quad \text{and} \quad
    \text{$\zeta[ W_i/X ]$ relates to $\eta[ V_i/X ]$}
  \end{equation}
  for all~$i \geqslant 0$.

  \emph{Proof of the claim.} Induction on~$i$. Basis, $i=0$:
  Straightforward. Induction step, $i{+}1$: Note, $\zeta[ W_i/X ]$
  relates to~$\eta[ V_i/X ]$ by induction hypothesis
  for~$i$. Therefore we have $(s,\singleton{p}) \in \bbFF{ \varphi'
  }(\zeta[ W_i/X ]) = W_{i{+}1}$ iff $(s,p) \in \bbF{ \varphif }(\eta[
  V_i/X ]) = V_{i{+}1}$ by induction hypothesis
  for~$\varphi'$. Because of this it follows that $\zeta[ W_{i{+}1}/X
  ]$~relates to~$\eta[ V_{i{+}1}/X ]$.

  Now, unfolding the various definitions, we have $\bbFF{ \mu X
    . \mkern1mu \varphi' }(\zeta) = \bigcup_{i=0}^\infty \: \bbFF{
    \varphi' }(\zeta[ W_i/X ]) = \bigcup_{i=0}^\infty \, W_{i}$ and
  $\bbF{ \mu X . \mkern1mu \varphif }(\eta) = \bigcup_{i=0}^\infty \:
  \bbF{ \varphif }(\eta[V_i/X ]) = \bigcup_{i=0}^\infty \, V_{i}$.
  Using Equation~(\ref{eqn-claim-zeta-singleton-vs-eta}) we obtain
  that $(s,\singleton{p}) \in \bbFF{ \mu X . \mkern1mu \varphi'
  }(\zeta)$ iff $(s,p) \in \bbF{ \mu X . \mkern1mu \varphif }(\eta)$,
  as was to be shown.
\end{proof}


\noindent 
Rephrasing Lemma~\ref{lemma-zeta-singleton-vs-eta} in terms of
satisfaction, we have $\singleton{p} \modelsFprime \varphi'$ iff $p
\modelsF \fmf( \varphi' )$ for $\varphi' \in \muLfprime$ closed.


To relate the semantics of $\muLfprime$ and~$\muLf$, we use a
projection function $\fp : \sPSet \to \spSet$ from sets of
state-family pairs to sets of state-product pairs, given by
$\fp(\lc (s_i,P_i) \mid i \in I \rc = \lc (s_i,p) \mid i \in I ,\, p
\in P_i \rc$.

\begin{theorem}
  \label{theorem-zeta-semantics-vs-eta-semantics}
  Let an FTS~$F$, with state~$s \in S$ and a set of products~$P
  \subseteq \calP$, be given. Suppose $\zeta$ and~$\eta$ are such that
  $\fp(\zeta(X))\subseteq \eta(X)$, for all $X \in \calX$. Then it
  holds that
  \begin{displaymath}
    (s,P) \in \bbFF{\varphi'}(\zeta) 
    \implies 
    \forall p \in P \colon
    (s,p) \in \bbF{ \fmf( \varphi' )}(\eta)
  \end{displaymath}
  for all negation-free $\varphi' \in \muLfprime$.
\end{theorem}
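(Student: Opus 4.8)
The plan is to prove the implication by induction on the structure of the negation-free formula~$\varphi'$, exactly paralleling the proofs of Theorem~\ref{theorem-eta-semantics-vs-eps-semantics} and Lemma~\ref{lemma-zeta-singleton-vs-eta}, but now carrying the hypothesis $\fp(\zeta(X)) \subseteq \eta(X)$ as an invariant. The crucial observation is that, for singletons, Lemma~\ref{lemma-zeta-singleton-vs-eta} already settles the case $P = \singleton{p}$; the content of this theorem is the passage from the single transition witnessing $(s,P) \in \bbFF{\myruby{a|\chi}\mkern1mu\varphi'}(\zeta)$ to the possibly many transitions needed to witness $(s,p) \in \bbF{\mydiamond{a|\chi}\mkern1mu\fmf(\varphi')}(\eta)$ for each $p \in P$ individually. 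The restriction to negation-free formulas is what makes an implication (rather than an equivalence) the right statement, since negation would reverse the direction of the set inclusion.

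I would begin with the propositional base cases. For $\falsum, \verum, X$ the claim is immediate, using $\fp(\zeta(X)) \subseteq \eta(X)$ for the variable case and noting $\fp(S \times \powersetP) = S \times \calP$. For $\varphi' \lor \psi'$ and $\varphi' \land \psi'$, both connectives are monotone and distribute over $\fp$, so the implication lifts from the two subformulas; here the absence of negation is what keeps the induction going. The modal cases are the heart. For $\myruby{a|\chi}\mkern1mu\varphi'$: from $(s,P) \in \bbFF{\myruby{a|\chi}\mkern1mu\varphi'}(\zeta)$ we obtain $P \subseteq \chi$ and a single transition $s \transF{a}{\gamma} t$ with $P \subseteq \gamma$ and $(t, P \cap \chi \cap \gamma) \in \bbFF{\varphi'}(\zeta)$. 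Since $P \subseteq \chi$ and $P \subseteq \gamma$ we have $P \cap \chi \cap \gamma = P$, so $(t,P) \in \bbFF{\varphi'}(\zeta)$, and the induction hypothesis applied to $\varphi'$ yields $(t,p) \in \bbF{\fmf(\varphi')}(\eta)$ for every $p \in P$. Fixing any $p \in P$, we have $p \in \chi$, $p \in \gamma$, and the same transition $s \transF{a}{\gamma} t$ witnesses $(s,p) \in \bbF{\mydiamond{a|\chi}\mkern1mu\fmf(\varphi')}(\eta)$. For $\mybox{a|\chi}\mkern1mu\varphi'$ I would argue contrapositively at the level of a fixed $p \in P$: assume the box holds for $(s,P)$ and let $s \transF{a}{\gamma} t$ be any transition with $p \in \gamma$; then $P \cap \chi \cap \gamma \supseteq \singleton{p}$ is nonempty (using $p \in P \cap \chi \cap \gamma$, which requires $p \in \chi$ — and if $p \notin \chi$ the conclusion holds vacuously for that $p$), so the box premise gives $(t, P \cap \chi \cap \gamma) \in \bbFF{\varphi'}(\zeta)$; since $p \in P \cap \chi \cap \gamma$, the induction hypothesis delivers $(t,p) \in \bbF{\fmf(\varphi')}(\eta)$, as needed.

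For the fixpoint cases I would adapt the approximant argument used in the earlier proofs. Writing $W_0 = \varnothing$, $W_{i+1} = \bbFF{\varphi'}(\zeta[W_i/X])$ and $V_0 = \varnothing$, $V_{i+1} = \bbF{\fmf(\varphi')}(\eta[V_i/X])$, the plan is to show by an inner induction on~$i$ that $\fp(W_i) \subseteq V_i$ and, simultaneously, that $\zeta[W_i/X]$ still satisfies $\fp(\zeta[W_i/X](Y)) \subseteq \eta[V_i/X](Y)$ for all $Y$; the step uses the outer induction hypothesis on $\varphi'$. Taking unions over $i$ (justified by continuity, as in the cited proofs) then transfers the inclusion to the least fixpoints. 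The greatest fixpoint $\nu X . \varphi'$ in a negation-free formula can be handled by the same monotone approximation from above, or simply by noting monotonicity of the functionals gives $\fp(\bbFF{\nu X.\varphi'}(\zeta)) \subseteq \bbF{\nu X.\fmf(\varphi')}(\eta)$ directly from the one-step inclusion. The main obstacle I anticipate is bookkeeping the set $P \cap \chi \cap \gamma$ correctly in the box case: one must be careful that the nonemptiness condition $P \cap \chi \cap \gamma \neq \varnothing$ in the $\muLfprime$-semantics is triggered precisely by the element $p$ one is testing, and that products $p \notin \chi$ are disposed of vacuously rather than forced through the transition — getting this alignment right is what ultimately pins down why the implication, and not the reverse, is what holds.
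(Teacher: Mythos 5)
Your proposal is correct and follows essentially the same route as the paper's proof: structural induction, with the $\myruby{a|\chi}$ case handled by observing $P \cap \chi \cap \gamma = P$ so that the single witnessing transition serves every $p \in P$, and the $\mu$-case handled by an inner induction on approximants $W_i$, $V_i$ establishing $\fp(W_i) \subseteq V_i$ together with the environment invariant, followed by taking unions via continuity of $\fp$. The paper exhibits only these two cases; your treatment of the box case (vacuous disposal of $p \notin \chi$, nonemptiness of $P \cap \chi \cap \gamma$ triggered by the tested $p$) and of $\nu$ matches what the paper relies on elsewhere (cf.\ the proof of Theorem~\ref{lemma-zeta-semantics-vs-eta-semantics-without-diamonds}), so it is a faithful completion rather than a different argument.
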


\begin{proof}
  Induction on the structure of~$\varphi'$. We only exhibit two cases.

\vspace*{-0.25\baselineskip}
  Case~1, $\myruby{a|\chi}$: If $(s,P) \in \bbFF{ \myruby{a|\chi}
    \mkern1mu \varphi' }(\zeta)$, then we have $P \subseteq \chi$ and
  $\gamma, t$ exist such that $s \transF{a}{\gamma} t$, $P
  \subseteq \gamma$ and $(t,P) \in \bbFF{ \varphi' }(\zeta)$. Thus,
  since $p \in P$ we have $p \in \chi$, $p \in \gamma$ and $(t,p) \in
  \bbF{ \varphif }(\eta)$ by induction hypothesis. 
  Therefore, $(s,p) \in \bbF{
    \mydiamond{a|\chi} \mkern1mu \varphi }(\eta)$.

  Case~2, $\mu X . \mkern1mu \varphi'$: We have $\bbFF{ \mu X
    . \mkern1mu \varphi' }(\zeta) = \bigcup_{i=0}^\infty \, W_i$ where
  $W_0 = \varnothing$, $W_{i{+}1} = \bbFF{ \varphi' }(\zeta [W_i/X] )$
  and $\bbF{ \mu X . \mkern1mu \varphif }(\eta) = \bigcup_{i=0}^\infty
  \, V_i$ where $V_0 = \varnothing$, $V_{i{+}1} = \bbF{ \varphif
  }(\eta [V_i/X] )$. We claim:
  \begin{equation}
    {\fp(W_i)}
    \subseteq
    {V_i}
    \qquad
    \text{for $i \geqslant 0$}
    \label{eq-Wi-subset-Vi}
  \end{equation}
  \emph{Proof of the claim.} 
  Induction on~$i$. Basis, $i=0$: Clear. Induction step, $i{+}1$: By induction
  hypothesis $\fp(W_i) \subseteq V_i$. Hence, $\fp( \zeta[W_i/X](Y) )
  \subseteq \eta[V_i/X](Y)$ for all $Y \in \calX$. Thus, by the induction
  hypothesis for $\varphi'$, we have that $(s,P) \in \bbFF{ \varphi'
  }(\zeta [W_i/X] )$ implies $(s,p) \in \bbF{ \varphif }(\eta [V_i/X]
  )$ given $p \in P$, hence $\fp( \bbFF{ \varphi' }(\zeta [W_i/X] )
  \subseteq \bbF{ \varphif }(\eta [V_i/X] )$. Therefore,
  $\fp(W_{i{+}1}) \subseteq V_{i{+}1}$, as was to be shown.

  Now, from claim~(\ref{eq-Wi-subset-Vi}), exploiting the continuity
  of~$\fp$, we obtain $\fp( \, \bbFF{ \mu X . \mkern1mu \varphi'
  }(\zeta) \, ) = \fp( \bigcup_{i=0}^\infty \, W_i ) =
  \bigcup_{i=0}^\infty \, \fp( W_i ) \subseteq \bigcup_{i=0}^\infty \,
  \fp( V_i ) = \fp( \, \bbF{ \mu X . \mkern1mu \varphif }(\eta) \,
  )$, or phrased differently $(s,P) \in \bbFF{ \mu X . \mkern1mu
    \varphi' }(\zeta)$ implies $(s,p) \in \bbF{ \mu X . \mkern1mu
    \varphif }(\eta) \, )$ which completes the induction step for $\mu
  X . \mkern1mu \varphi'$.
\end{proof}

\halfblankline

\noindent
In terms of satisfaction,
Theorem~\ref{theorem-zeta-semantics-vs-eta-semantics} can be
reformulated as 
\begin{equation}
  \label{eqn-zeta-models-vs-eta-models}
  P \modelsFprime \varphi' 
  \implies 
  \forall \mkern1mu p \in P \colon 
  p \modelsF \fmf(\varphi')
\end{equation}
for all $P \subseteq \calP$, $\varphi' \in \muLfprime$ closed. Thus,
given a set of products and a formula~$\varphif \in \muLf$, instead of
verifying~$\varphif$ for each individual product, we may seek to
verify the corresponding $\varphi'$ into~$\muLfprime$ of~$\varphi$ at
once for the complete set of products. In case of an affirmative
answer, Equation~(\ref{eqn-zeta-models-vs-eta-models}) guarantees that
the formula~$\varphif$ will hold for the separate products.

For formulas in the intersection of $\muLf$ and~$\muLfprime$ that are
negation-free, i.e.\ formulas without any negation or modalities
$\mydiamond{a|\chi}$ or~$\myruby{a|\chi}$, we have a stronger result.

\halfblankline

\begin{theorem}
  \label{lemma-zeta-semantics-vs-eta-semantics-without-diamonds}
  Suppose the formula $\varphi' \in\muLf \cap \muLfprime$ is
  negation-free, and $\zeta \in \sPEnv$ and $\eta \in \spEnv$ are
  such that $(s,P) \in \zeta(X) \iff \forall p \in P \colon (s,p) \in
  \eta(X)$, for all $s \in S$, $P \subseteq \calP$ and $X \in
  \calX$. Then it holds that $$(s,P) \in \bbFF{\varphi'}(\zeta) \iff
  \forall \mkern1mu p \in P \colon (s,p) \in \bbF{\varphi'}(\eta)$$
  for all states $s \in S$ and sets of products $P \subseteq \calP$.
\end{theorem}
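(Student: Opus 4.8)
The plan is to argue by structural induction on~$\varphi'$. Since $\varphi'$ is negation-free and lies in $\muLf \cap \muLfprime$, it is built solely from $\falsum$, $\verum$, $\lor$, $\land$, the box modality $\mybox{a|\chi}$, variables, and the fixpoint operators $\mu X$ and~$\nu X$; in particular it contains neither $\mydiamond{a|\chi}$ nor~$\myruby{a|\chi}$, so $\bbFF{\cdot}$ and~$\bbF{\cdot}$ read the same syntax. I would prove the stated biconditional for \emph{all} related pairs of environments $\zeta \in \sPEnv$ and $\eta \in \spEnv$ simultaneously, as this generality is needed to re-enter the induction under the updated environments produced by the fixpoint cases. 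It is convenient to read the claim as: $\bbFF{\varphi'}(\zeta)$ consists of exactly those pairs $(s,P)$ with $P$ contained in the per-state satisfaction set $D_{\varphi'}(s) = \lc p \mid (s,p) \in \bbF{\varphi'}(\eta) \rc$.

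The routine cases I would dispose of first. For~$\verum$ both sides hold for every $(s,P)$, and for~$X$ the biconditional is precisely the assumed relation between $\zeta$ and~$\eta$; for $\falsum$ both sides match for every nonempty~$P$. Conjunction is immediate from the induction hypothesis together with the facts that $\forall\, p \in P$ distributes over~$\land$ and that both $\bbFF{\cdot}$ and~$\bbF{\cdot}$ interpret~$\land$ as intersection.

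The principal modal case is the box. Here I would first rewrite $\forall\, p \in P \colon (s,p) \in \bbF{\mybox{a|\chi}\varphi'}(\eta)$ as: for every transition $s \transF{a}{\gamma} t$, every $p \in P \cap \chi \cap \gamma$ satisfies $(t,p) \in \bbF{\varphi'}(\eta)$. Applying the induction hypothesis at the target state~$t$ to the subfamily $P \cap \chi \cap \gamma$ turns this into $(t, P \cap \chi \cap \gamma) \in \bbFF{\varphi'}(\zeta)$, which is exactly what the $\muLfprime$ box-clause demands. The one thing to verify is that the guard $P \cap \chi \cap \gamma \neq \varnothing$ of Definition~\ref{df-zeta-semantics} is harmless: when this intersection is empty the matching universally quantified statement over $P \cap \chi \cap \gamma$ is vacuously true, so including or omitting the guard does not change the truth value. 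This also keeps every recursive appeal to the induction hypothesis on a nonempty family, so the failure of the biconditional for $\falsum$ at the empty family does not propagate. For the fixpoints I would pass to the Kleene approximants $W_0 = \varnothing$, $W_{i+1} = \bbFF{\varphi'}(\zeta[W_i/X])$ and $V_0 = \varnothing$, $V_{i+1} = \bbF{\varphi'}(\eta[V_i/X])$, show by an inner induction on~$i$ that each pair $\zeta[W_i/X]$, $\eta[V_i/X]$ stays related and that $(s,P) \in W_i \iff \forall\, p \in P \colon (s,p) \in V_i$, and finally transport the biconditional to the union (for~$\mu$) and the intersection (for~$\nu$). The transport uses monotonicity of the two chains together with finiteness of~$\calP$, so that $\forall\, p \in P$ commutes with the countable supremum.

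The step I expect to be the main obstacle is disjunction. The forward direction is painless: if $(s,P) \in \bbFF{\varphi'}(\zeta)$ then by the induction hypothesis $\forall\, p \in P \colon (s,p) \in \bbF{\varphi'}(\eta) \subseteq \bbF{\varphi' \lor \psi'}(\eta)$, and symmetrically for~$\psi'$. The converse, however, requires deducing from $\forall\, p \in P \colon (s,p) \in \bbF{\varphi'}(\eta) \cup \bbF{\psi'}(\eta)$ that \emph{either} all of~$P$ lands in $\bbF{\varphi'}(\eta)$ \emph{or} all of~$P$ lands in $\bbF{\psi'}(\eta)$, since $\bbFF{\varphi' \lor \psi'}(\zeta) = \bbFF{\varphi'}(\zeta) \cup \bbFF{\psi'}(\zeta)$. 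As $\forall\, p \in P$ does not distribute over disjunction, this does not follow from the semantics alone: a family~$P$ whose products are \lq split\rq\ between the two disjuncts satisfies the right-hand side but not the left. It is precisely here that the argument must either exploit a genuine structural property of the fragment or restrict to its disjunction-free part, and this is where I would concentrate my effort to identify exactly which additional hypothesis makes the converse go through.
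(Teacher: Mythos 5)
In the cases you do handle, you have reconstructed the paper's own proof: the paper likewise proceeds by structural induction, resolves the box case by applying the induction hypothesis at the target state to the subfamily $P \cap \chi \cap \gamma$ (including, implicitly, your observation that the $\neq \varnothing$ guards keep all recursive appeals on nonempty families), and treats $\mu X . \mkern1mu \varphi'$ via the Kleene approximants $Z_i$, $V_i$, an inner induction on~$i$ showing that the updated environments stay related, and finiteness of~$P$ together with the ascending-chain property to commute $\forall \mkern1mu p \in P$ with the union over~$i$. So far you are in lockstep with the paper.

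The obstacle you isolate at disjunction, however, is not a defect of your attempt: it is a genuine hole in the theorem, which the paper's proof passes over in silence --- it says \lq\lq we exhibit two cases only\rq\rq\ and shows exactly $\mybox{a|\chi}$ and $\mu X . \mkern1mu \varphi$, never~$\lor$. Your suspicion that the right-to-left direction fails there is correct, and no further hypothesis on $\zeta$ and~$\eta$ repairs it. Concretely, take $\calF = \singleton{\ttf, \ttg}$, products $p_1 = \singleton{\ttf}$ and $p_2 = \singleton{\ttg}$, $P = \singleton{p_1, p_2}$, an FTS whose only transition is $s_0 \transF{a}{\verum} s_1$, and the closed, negation-free formula $\varphi' = \mybox{a|\ttf} \falsum \lor \mybox{a|\ttg} \falsum$ in $\muLf \cap \muLfprime$ (the environments are immaterial since $\varphi'$ is closed). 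Then $(s_0,p_1) \in \bbF{\mybox{a|\ttg} \falsum}(\eta)$ and $(s_0,p_2) \in \bbF{\mybox{a|\ttf} \falsum}(\eta)$, both vacuously because $p_1 \notin \ttg$ and $p_2 \notin \ttf$, so $\forall \mkern1mu p \in P \colon (s_0,p) \in \bbF{\varphi'}(\eta)$. But $(s_0,P) \notin \bbFF{\mybox{a|\ttf} \falsum}(\zeta)$: since $P \cap \ttf \cap \verum = \singleton{p_1} \neq \varnothing$, membership would require $(s_1, \singleton{p_1}) \in \bbFF{\falsum}(\zeta) = \varnothing$; symmetrically $(s_0,P) \notin \bbFF{\mybox{a|\ttg} \falsum}(\zeta)$, hence $(s_0,P) \notin \bbFF{\varphi'}(\zeta)$. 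So the stated equivalence is false for the full negation-free fragment: the family splits between the disjuncts exactly as you predicted, only the left-to-right implication (in the spirit of Theorem~\ref{theorem-zeta-semantics-vs-eta-semantics}) survives disjunction, and the biconditional can be salvaged only by restricting the fragment further, e.g.\ to $\lor$-free formulas --- under which your induction, and the paper's, does go through, modulo the marginal mismatch at $(\falsum, P = \varnothing)$ that you also flagged and that is excluded by requiring $P$ nonempty.
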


\begin{proof}
  Induction on the structure of~$\varphi$. We exhibit two
  cases only. 

  Case~1, $\mybox{a|\chi}$: ($\Longrightarrow$) Suppose $(s,P) \in \bbFF{
    \mybox{a|\chi} \varphi }(\zeta)$ and pick $p \in P$. As in the
  proof of Theorem~\ref{theorem-zeta-semantics-vs-eta-semantics}, we
  reason as follows: Suppose $p \in \chi$ and $\gamma, t$ are such
  that $s \transF{a}{\gamma} t$ and $p \in \gamma$. Then we have $P
  \cap \chi \neq \varnothing$ and $P \cap \chi \cap \gamma \neq
  \varnothing$. Hence $(t,P \cap \chi \cap \gamma) \in \bbFF{ \varphi'
  }(\zeta)$ by definition of $\bbFF{ \mybox{a|\chi} \varphi'
  }(\zeta)$. Since $p \in P \cap \chi \cap \gamma$ it follows that
  $(t,p) \in \bbF{ \varphi }(\eta)$ by induction hypothesis.

  ($\Longleftarrow$) Suppose $(s,p) \in \bbF{ \mybox{a|\chi} \varphi
  }(\eta)$ for all $p \in P$. Assume furthermore, $P \cap \chi \neq
  \varnothing$ and $\gamma, t$ are such that $s \transF{a}{\gamma} t$
  and $P \cap \chi \cap \gamma \neq \varnothing$. Clearly, for all $p
  \in P \cap \chi \cap \gamma$ we have $p \in \chi$ and $p \in
  \gamma$. Thus, by definition of $\bbF{ \mybox{a|\chi} \varphi
  }(\eta)$, we have $(t,p) \in \bbF{ \varphi }(\eta)$ for all $p \in P
  \cap \chi \cap \gamma$. By induction hypothesis, $(t,P \cap \chi
  \cap \gamma) \in \bbFF{ \varphi }(\zeta)$. It follows that $(s,P)
  \in \bbFF{ \mybox{a|\chi} \varphi }(\zeta)$ by definition of $\bbFF{
    \mybox{a|\chi} \varphi }(\zeta)$.

  Case~2, $\mu X . \mkern1mu \varphi$: We have that $\bbFF{ \mu X
    . \mkern1mu \varphi }(\zeta) = \bigcup_{i=0}^\infty \, Z_i$ where
  $Z_0 = \varnothing$, $Z_{i{+}1} = \bbFF{ \varphi }(\zeta [Z_i/X] )$
  and $\bbF{ \mu X . \mkern1mu \varphi }(\eta) = \bigcup_{i=0}^\infty
  \, V_i$ where $V_0 = \varnothing$, $V_{i{+}1} = \bbF{ \varphi }(\eta
  [V_i/X] )$. We claim:
  \begin{equation}
    (s,P) \in Z_i \iff \forall p \in P \colon (s,p) \in V_i
    \qquad
    \text{for $i \geqslant 0$}
    \label{eq-Zi-subset-Vi-no-diamonds}
  \end{equation}
  \emph{Proof of the claim.}
  Induction on~$i$. Basis, $i=0$: Trivial. Induction step, $i{+}1$: By
  induction hypothesis and the assumption for $\zeta$ and~$\eta$ we
  have, for any $s \in S$, any $P \subseteq \calP$ and
  $Y \in \calX$, that $(s,P) \in \zeta [Z_i/X](Y)$ iff $(s,p) \in \eta
  [V_i/X](Y)$ for all $p \in P$. Thus, $(s,P) \in Z_{i{+}1}$ iff
  $(s,P) \in \bbFF{ \varphi }(\zeta [Z_i/X])$ iff $\forall p \in P$:
  $(s,p) \in \bbF{ \varphi }(\eta [V_i/X])$, by induction hypothesis
  for~$\varphi$, iff $\forall p \in P$: $(s,p) \in V_{i{+}1}$.

  Now, using claim~(\ref{eq-Zi-subset-Vi-no-diamonds}), we derive,
  for $s \in S$ and $P \subseteq \calP$, $(s,P) \in
  \bigcup_{i=0}^\infty \: Z_i$ iff $\exists \mkern1mu i \geqslant 0$:
  $(s,P) \in Z_i$ iff $\exists \mkern1mu i \geqslant 0$: $\forall p
  \in P$: $(s,p) \in V_i$ iff $\forall p \in P$: $(s,p) \in
  \bigcup_{i=0}^\infty \: V_i$. For the latter equivalence we use that
  $( V_i )_{i=0}^\infty$ is an ascending chain, i.e.\ $V_i \subseteq
  V_{i{+}1}$ for all~$i$, and that the set~$P$ is finite. From this it
  follows that $(s,P) \in \bbFF{ \mu X . \varphi }(\zeta)$ iff
  $\forall p \in P$: $(s,p) \in \bbF{ \mu X . \varphi }(\eta)$, for
  all $s \in S$ and $P \subseteq \calP$.
\end{proof}

\halfblankline

\noindent
Following the above theorem, the strengthening of
Equation~(\ref{eqn-zeta-models-vs-eta-models}) for closed, negation-free as
well as $\mydiamond{a|\chi}$-free and~$\myruby{a|\chi}$-free feature
$\mu$-formulas reads
\begin{gather}
  \label{eqn-zeta-models-vs-eta-models-special-case}
  P \modelsFprime \varphi' 
  \iff
  \forall \mkern1mu p \in P \colon 
  p \modelsF \varphi'
\end{gather}
for all $P \subseteq \calP$, $\varphi' \in \muLfprime \cap \muLf$
closed and negation-free.


\section{Family-based model checking of~$\muLfprime$}
\label{sec-mc-zeta}

In view of Theorem~\ref{theorem-zeta-semantics-vs-eta-semantics} and
Equation~(\ref{eqn-zeta-models-vs-eta-models}), as presented in the
previous section, we may divert to verifying $P \modelsFprime
\varphi'$ family-wise when we aim to check a property~$\varphi$
expressed in the feature $\mu$-calculus~$\muLf$ for a set of
products~$P$. The $\muLfprime$-formula~$\varphi'$ is obtained from the
$\muLf$-formula~$\varphi$ just by replacing
modalities~$\mydiamond{a|\chi}$ by modalities~$\myruby{a|\chi}$. However,
the granularity for model checking for~$\muLfprime$ is that of sets of
products rather than individual products as for~$\muLf$.

In principle, a dedicated model checker for~$\muLfprime$ can be built
starting from Definition~\ref{df-zeta-semantics} following a recursion
scheme. However, in such a scenario specific optimization techniques
and performance enhancing facilities need to be constructed from
scratch. Here, we sketch an alternative approach. The problem of
deciding $P \modelsFprime \varphi'$ can be formalized using
multi-sorted first-order modal $\mu$-calculus, hereafter referred to
as $\muLFO$, as proposed by~\cite{GM98,GW05}. Such a calculus is given
in the context of a data signature $\Sigma = (S,O)$, where $S$~is a
set of sorts and $O$~is a set of operations, and of a set of sorted
actions~$\calA$.  Formulas~$\varphi$ and sorted actions~$a(\varv)$ of
(a fragment of)~$\muLFO$ are given by
\begin{displaymath}
    \begin{array}{l}
    \varphi \bnfeq 
    \begin{array}[t]{@{}l}
      b \mid \neg \varphi \mid \varphi \lor \psi
      \mid \varphi \land \psi 
      \mid \exists \varv \ofsort \sortD. \varphi 
      \mid \forall \varv \ofsort \sortD. \varphi \mid 
      {} \smallskip \\
      \mydiamond{a(\varv)} \mkern1mu \varphi \mid 
      \mybox{a(\varv)} \mkern1mu \varphi \mid 
      X(t)  \mid 
      \mu X \mkern-1mu (\varv\ofsort \sortD = t) . \varphi 
      \mid \nu X \mkern-1mu (\varv\ofsort \sortD = t) . \varphi\\
    \end{array}
    \end{array}
  \end{displaymath}
where for $\mu X \mkern-1mu (\varv \ofsort \sortD = t) . \varphi$ and
$\nu X \mkern-1mu (\varv \ofsort \sortD = t). \varphi$ all free
occurrences of~$X$ in~$\varphi$ are in the scope of an even number of
negations, $b$~is an expression of Boolean sort, $t$ is an arbitrary
expression and $\varv$~is a variable of sort $\sortD$.

Expressions over $\muLFO$ are to be interpreted over specific LTS
where actions carry parameters. The expressions given by~$\alpha$ in
the above grammar allow for reasoning about sets of such parametrized
actions. In our approach to SPL, we use these parameters to model sets
of products. Moreover, we use the parameter~$\varv$ of the fixpoint
operators to keep track of the set of products for which we are
evaluating the fixpoint formula. Note, this set is dynamic: whenever
we encounter a modality such as $\mybox{a|\chi} \mkern1mu \varphi$ or
$\myruby{a|\chi} \mkern1mu \varphi$, the set of products for which we
need to evaluate $\varphi$ is restricted by $\chi$. The bottom line
is that we can devise a translation~$T$ that answers whether $(s,P) \in
\bbFF{\varphi}$, where $s$ is a state of an FTS, by answering $\bar{s}
\in \bbB{T(P,\varphi)}$, where $\bar{s}$ is a state of an LTS with
parametrized actions. While a detailed exposition is beyond the scope
of the current paper, we illustrate the approach using a small SPL
example.

\halfblankline

\begin{example}
  \label{ex-coffee-machine-cont}
  Consider the FTS $F$ modeling a family of (four) coffee machines
  from Example~\ref{ex-coffee-machine}, recalled below (left).  Also
  depicted below (right) is the LTS ${L}(F)$ with parametrized actions
  that represents $F$.
\begin{center}
  \begin{tikzpicture}[>=stealth',every state/.style={draw, minimum
      size=10pt,inner sep=1pt},node distance=40pt] 
    \node[state] (s0) {\scriptsize $s_0$};
    \node[state, right=of s0] (s1) {\scriptsize $s_1$};
    \node[state, right=of s1] (s2) {\scriptsize $s_2$};
    \path ([xshift=-10pt,yshift=30pt] s0) node {\scriptsize $F$}; 

    \draw[->] 
    (s0)++(0,-0.5) -- (s0)
    (s0) edge node[yshift=-3pt,above] {\scriptsize $\mathit{ins} |
      \mkern1mu \verum$} (s1) 
    (s1) edge[out=105,in=75] node[yshift=-2pt,above] {\scriptsize
      $\mathit{sd} | \mkern1mu \verum$} (s0)
    (s1) edge node[yshift=-3pt,above] {\scriptsize $\mathit{ins} |
      \mkern0mu D$} (s2) 
    (s2) edge[bend left] node[yshift=0pt,below] {\scriptsize ${\ell}
      \mkern-1.5mu {g} | \mkern1mu \verum$} (s0)
    (s0) edge[loop,min distance=10mm,out=210,in=150]
    node[left,xshift=2pt] {\scriptsize $\mathit{cd} | \mkern1mu C$} (s0);

    \node[state, right of=s2,xshift=60pt] (t0) {\scriptsize $\bar{s}_0$};
    \node[state, right=of t0] (t1) {\scriptsize $\bar{s}_1$};
    \node[state,right =of t1] (t2) {\scriptsize $\bar{s}_2$};
    \path ([xshift=-10pt,yshift=30pt] t0) node {\scriptsize ${L}(F)$}; 

    \draw[->] 
    (t0)++(0,-0.5) -- (t0)
    (t0) edge node[yshift=-3pt,above] {\scriptsize $\mathit{ins}(\verum)$} (t1)
    (t1) edge[out=105,in=75] node[yshift=-2pt,above] {\scriptsize
      $\mathit{sd}(\verum)$} (t0) 
    (t1) edge node[yshift=-3pt,above] {\scriptsize $\mathit{ins}(D)$} (t2)
    (t2) edge[bend left] node[yshift=0pt,below] {\scriptsize $\ell
      \mkern-1.5mu {g} (\verum)$} (t0) 
    (t0) edge[loop,min distance=10mm,out=210,in=150] node[left]
    {\scriptsize $\mathit{cd}(C)$} (t0);

\end{tikzpicture}
\end{center}
Let $\varphi'$ be the $\muLfprime$-formula $\nu X. \mu Y. \bigl( \, (
\mybox{ \mathit{ins} | E } Y \land \mybox{ \mathit{cd}| E }Y \land
\mybox{ \ell \mkern-1.5mu {g} | E} Y ) \land \mybox{\mathit{sd} | E} X
\bigr)$ expressing that on all infinite runs involving actions
$\mathit{ins}$, $\mathit{cd}$, ${\ell \mkern-1.5mu {g}}$,
and~$\mathit{sd}$, the action~$\mathit{sd}$ occurs infinitely
often. Note that this formula holds in state~$s_0$ and only for
products that do not feature the dollar unit (feature~$D$) nor the
cleaning and descale unit (feature~$C$). For, if the cleaning and
descale unit is present, there is a violating infinite run consisting
of $\mathit{cd}$-actions only, whereas when the dollar unit is
present, there is an infinite run containing only $\mathit{ins}$ and
$\ell \mkern-1.5mu {g}$ actions.

Assume that the sort $\PSet$ represents the set $\bftwo^{\calF}$. The
set of features~$\calF$ is finite and, therefore, the sort~$\PSet$ is
easily defined and can be used to effectively compute with. Moreover,
we presume all usual set operators on~$\bftwo^{\calF}$ to have
counterparts for $\PSet$ too. The $\muLFO$-formula $T(P,\varphi')$ that
corresponds to $\varphi' \in \muLfprime$ above is then the following:
\begin{displaymath}
  \begin{array}{l}
    \nu X(P_x \ofsort \PSet = P).  \mu Y(P_y \ofsort \PSet = P_x). (
    {} \smallskip \\ \qquad
    (P_y \cap E = \varnothing \lor \forall e \ofsort
    \PSet. \mybox{ \,\! \mathit{ins}(e) \,\! }(P_y \cap E \cap e = \varnothing
    \lor Y(P_y \cap E \cap e) )) 
    \land {} \\ \qquad
    (P_y \cap E = \varnothing \lor \forall e \ofsort
    \PSet. \mybox{\,\, \mathit{cd}(e) \,\! }(P_y \cap E \cap e = \varnothing \lor
    Y(P_y \cap E \cap e) )) 
    \land {} \\ \qquad
    (P_y \cap E = \varnothing \lor \forall e \ofsort
    \PSet. \mybox{ \,\, {\ell \mkern-1.5mu {g} \, }(e)\,\!}(P_y \cap E \cap e =
    \varnothing \lor Y(P_y \cap E \cap e) ))  
    \land {} \\ \qquad
    (P_y \cap E = \varnothing \lor \forall e \ofsort
    \PSet. \mybox{\,\, \mathit{sd}(e) \, \! }(P_y \cap E \cap e =
    \varnothing \lor X(P_y \cap E \cap e) )) 
\ )
\end{array}
\end{displaymath}
In the resulting formula~$T(P,\varphi')$, the modal operators
of~$\varphi'$ are essentially mapped to the modal operators
of~$\muLFO$ and the information concerning the feature expressions is
handled by the data parameters and appropriate conditions, mirroring
the semantics of $\muLfprime$. Deciding whether $(s_0 \mkern1mu ,\neg (C
\lor D)) \in \bbFF{\varphi'}$ then translates to verifying whether
$\bar{s}_0 \in \bbB{T(\neg(C \lor D),\varphi')}$. The latter can be
done using a toolset such as \mCRL, which supports~$\muLFO$ and which
allows for representing LTS with parametrized actions.
\end{example}

\section{Concluding remarks and future work}
\label{conclusion}

We have introduced two variants of the modal $\mu$-calculus, each with
explicit FTS semantics, by incorporating feature expressions, and we
have compared these logics among each other. This resembles work done
for LTL~\cite{CCSHLR13} and CTL~\cite{CCHLS14}, but we can also
express typical $\mu$-calculus properties not expressible in LTL or
CTL\@. We have then shown how to achieve family-based model checking
of SPL with the existing \mCRL{} toolset by exploiting an embedding of
the newly introduced feature-oriented $\mu$-calculus variant, with an
FTS semantics in terms of sets of products, into the $\mu$-calculus
with data. 
It follows that from a logical point of view the featured
$\mu$-calculi proposed here are a sublogic of the $\mu$-calculus with
data. However, methodologically the new modalities in~$\muLf$
and~$\muLfprime$ highlight via feature expressions the variability and
allow to direct the analysis to specific families of products, which
may give better insight and, in the preliminary casestudies conducted,
quicker response times of the model checker.
In~\cite{BDVW16b}, we evaluate the application of our
approach to a larger SPL model from the literature, the well-known
minepump model.

In this paper we have considered family-based \emph{model checking},
but the same principle has also been applied to other analysis
techniques, like theorem proving, static analysis, and type
checking~\cite{TAKSS14}.  Oftentimes this requires extending existing
tools, but in some cases---like the one described in this paper---the
analysis problem can be encoded in an existing specification language
to allow off-the-shelf tools to be reused. In~\cite{TSHA12}, for
instance, all Java feature modules and their corresponding
feature-based specifications in a Java Modeling Language extension are
translated into a single family-based meta\-specification that can be
passed as-is to the \texttt{K\MakeLowercase{e}Y} theorem
prover~\cite{BHS07}.

Finally, we outline a couple of possibilities for future work.  First,
the main results presented in this paper
(Theorems~\ref{theorem-eta-semantics-vs-eps-semantics}
and~\ref{theorem-zeta-semantics-vs-eta-semantics}) demonstrate that
the validity of a formula for a family of products implies its
validity for the family's individual products. As is done for MTS
in~\cite{BFGM15b}, it would be interesting to establish complementary
results concerning the preservation of the \emph{in\/}validity of a
formula for a product family by the family's individual products. One
possible way to try to achieve this is by providing an alternative
semantics for the diamond and box operators denoting may and must 
modalities.

Second, the implication of
Theorem~\ref{theorem-zeta-semantics-vs-eta-semantics} is strengthened
to an equivalence in
Theorem~\ref{lemma-zeta-semantics-vs-eta-semantics-without-diamonds}
(i.e.\ a formula is valid for a family of products iff it is valid for
the family's individual products) for a specific subset of feature
$\mu$-calculus formulas. It would be interesting to study the
conditions under which this equivalence can be obtained for a
larger set of feature $\mu$-calculus formulas. Possible strategies
include considering an alternative set of FTS (e.g.\ with a different
structure) or, following~\cite{CCSHLR13,CCHLS14}, separating the feature
expressions from the diamond and box operators and instead
parametrizing each formula with a feature-based operator that
quantifies the specific set of products for which the formula has to
be verified.

\subsection*{Acknowledgments}

Maurice ter Beek is supported by the EU FP7--ICT FET--Proactive project
QUANTICOL, 600708. 

\bibliographystyle{eptcs}
\bibliography{fmsple16}

\end{document}